\newcommand{\scalar}[2]{\langle#1\,,#2\rangle}
\newcommand{\norm}[1]{\|#1\|}
\newcommand{\vol}{\mathrm{vol}}
\renewcommand{\H}{\mathcal{H}}
\newcommand{\D}{\mathcal{D}}
\newcommand{\C}{\mathcal{C}}
\newcommand{\pM}{\partial M}
\title{\sc Numerical Solutions of the spectral problem for arbitrary self-adjoint extensions of the 1D Schr\"odinger equation\thanks{This work has been partially supported by the Spanish MICINN grant MTM2010-21186-C02-02, QUITEMAD programme P2009 ESP-1594.  AI has been partially supported  by Fundaci\'on Caja Madrid. JMP has been supported by the UCIIIM University through the PhD Program Grant M02-0910}}
\author{A. Ibort \footnotemark[2]
\and J. M. P\'erez-Pardo \footnotemark[3]}
\date{}
\begin{document}
\maketitle

\renewcommand{\thefootnote}{\fnsymbol{footnote}}

\footnotetext[2]{Department of Mathematics, Univ. of California at Berkeley, Berkeley CA 94720, USA.   On leave of absence from Depto. de Matem\'aticas, Univ. Carlos III de Madrid, Avda. de la Universidad 30, 28911 Legan\'es, Madrid, Spain.}
\footnotetext[3]{Depto. de Matem\'aticas, Univ. Carlos III de Madrid, Avda. de la Universidad 30, 28911 Legan\'es, Madrid, Spain.}

\renewcommand{\thefootnote}{\arabic{footnote}}

\begin{abstract}  A numerical algorithm to solve the spectral problem for arbitrary self-adjoint extensions of 1D regular Schr\"odinger operators is presented.  It is shown that the set of all self-adjoint extensions of 1D regular Schr\"odinger operators is in one-to-one correspondence with the group of unitary operators on the finite dimensional Hilbert space of boundary data, and they are characterized by a generalized class of boundary conditions that include the well-known Dirichlet, Neumann, Robin, (quasi-)periodic boundary conditions, etc.    The numerical algorithm is based on a nonlocal boundary extension of the finite element method and their convergence is proved.   An appropriate basis of boundary functions must be introduced to deal with arbitrary boundary conditions and the conditioning of its computation is analyzed.    Some significant numerical experiments are also discussed as well as the comparison with some standard algorithms.   In particular it is shown that appropriate perturbations of standard boundary conditions for the free particle leads to the theoretically predicted result of very large absolute values of the negative groundlevels of the system as well as the localization of the corresponding eigenvectors at the boundary (edge states).
\end{abstract}

\pagestyle{myheadings}
\thispagestyle{plain}
\markboth{A. Ibort and J. M. P\'erez-Pardo}{Self-adjoint extensions of 1D Schr\"odinger operators}

\section{Introduction}

The study of the self-adjointness of Schr\"odinger operators has been a fundamental mathematical problem since the beginning of Quantum Mechanics and there is a vast literature on the subject (see for instance \cite{Re75}, the review \cite{Si00} and references therein).   In spite of this, there is a continuous flow of new results and even surprises (see for instance the recent papers where some apparent paradoxical aspects of the spectrum of certain self-adjoint extensions of the Schr\"odinger operator in 2D are analyzed \cite{Be08}, \cite{Ma09}, \cite{Be09}).   

Consider the evolution of a quantum system on a $D$-dimensional Riemannian manifold $M$ with boundary $\partial M$ under the influence of a potential $\mathcal{V}$ which is given by the Schr\"odinger equation
$i\hbar \frac{\partial \Psi}{\partial t} = H \Psi ,$ with $H$ the Hamiltonian operator of the system given by 
\begin{equation}\label{hamiltonian} H = -\frac{\hbar^2}{2 m} \Delta_\eta + \mathcal{V}(x) =-\frac{\hbar^2}{2 m} \frac{1}{\sqrt{| \eta |}} \frac{\partial }{\partial x^j} \sqrt{| \eta | }\eta^{jk} \frac{\partial }{\partial x^k} + \mathcal{V}(x).
\end{equation}
Here $\Delta_\eta$ stands for the Laplace-Beltrami operator on $M$ defined by the metric tensor $\eta$ given by $\eta = \eta_{jk} (x) dx^k dx^j$, $| \eta | = | \det{ \eta_{jk}(x)} | $ and $\eta^{ij} \eta_{jk} = \delta_k^i$.
The second order differential operator $H$ is formally self-adjoint, however in order to define a unitary evolution of the quantum state $\Psi$, a self-adjoint extension of it must be specified.    If $H$ denotes one of such self-adjoint extensions, because of Stone's theorem, a one-parameter group $U_t$ of unitary operators exists such that $U_t = \exp (-it H/\hbar)$ and the quantum evolution of a given initial state $\Psi_0$ will be uniquely determined as $\Psi_t = U_t\Psi_0$.     Self-adjoint extensions of the operator $H$ are usually chosen by fixing the boundary values of the functions where the operator $H$ acts, typically  Dirichlet or Neumann boundary conditions.  However they are by no means the most general choice of boundary conditions determining self-adjoint extensions of the operator $H$ and a variety of other possibilities exist.    The development of quantum information technologies makes relevant the discussion of more general classes of boundary conditions, i.e., of general self-adjoint extensions for the operator $H$, as well as the numerical computation of their spectrum in order to integrate Schr\"odinger's equation.  For instance it has been recently proposed a physical implementation of a universal quantum computer by using scattering states on a quantum graph \cite{Ci09}, i.e., of a Schr\"odinger operator on a graph.  It is also relevant to mention here the celebrated exponential algorithmic speedup by a quantum walk by Childs {\em et al} \cite{Ci03}.   The self-adjoint extensions defining a quantum system on a graph, otherwise called quantum Kirchoff's rules \cite{Ko00}, \cite{Ko03} are just one more instance of these possibilities.  Another one is provided by the use of absorbing boundary conditions \cite{Ba04} or by diverse optical cavities implementations  \cite{Kn03} of quantum walks.  \\

In this paper we present an algorithm based on the finite element method to solve numerically the spectral problem for {\em all\,}  possible self-adjoint extensions of 1D regular Schr\"odinger operators.  In this particular instance, the Schr\"odinger operator takes the form of a Sturm-Liouville operator. There exist many methods for approaching this problem (see \cite{Ac09}, \cite{Ch99}, \cite{Ch09}, \cite{Fa57}, \cite{Pr73} just to mention a few of them), however they cannot be used to solve the problem of general self-adjoint extensions of it, as explained below.  Most of the previous algorithms solve the wide class of self-adjoint extensions given by the equations of the form:
$\alpha\cdot \Psi(a)+\beta\cdot \Psi'(a)=0$, $\gamma\cdot \Psi(b)+\delta\cdot \Psi'(b)=0$,
where $a$ and $b$ are the endpoints of the interval.   For instance, (quasi)-periodic boundary conditions can not be implemented with the above setting. 

In this work we will take the approach developed in \cite{As05} to describe the set of self-adjoint extensions of Laplace-Beltrami operators.   We will show in section \ref{Selfadjoint_extensions_of_Schrodinger_operators}  that in 1D the set of self-adjoint extensions of the Schr\"odinger operator $H$ is in one-to-one correspondence with the group of unitary operators on the space of boundary data for the problem and we will provide explicit constructions of such correspondence so that it can be actually used in the implementation of the corresponding numerical algorithms.  Similar results are valid in higher dimensions, however there are additional technical difficulties related to the functional spaces where the domains of the operators are defined, that make the presentation and the discussion of the results more difficult. Because of this and in order to make the presentation of the results as simple and as short as possible we will consider in this paper just the 1D situation leaving the discussion in higher dimensions for later works. Thus the manifold $M$ will be assumed to be one-dimensional and compact and the potential function $\mathcal{V}$ will be assumed to be regular and bounded below.

The numerical algorithm which is developed to solve the eigenvalue problem for all self-adjoint extensions of the Hamiltonian operator $H$ described in terms of boundary data is based on the finite element method, even though the treatment of the boundary basis functions is novel.   The determination of the boundary basis function is analyzed as well as its conditioning.   The stability and convergence of the method is proved by describing the corresponding self-adjoint extensions in weak form.   A number of experiments are presented that are consistent with the obtained results and shows the efficiency of the corresponding computer algorithms.   These results are compared with the standard methods in some significant examples showing again the good behavior of the proposed algorithm.   In particular it is shown that for a family of self-adjoint extensions the groundlevel of the system must go to $-\infty$, a behaviour that is described nicely by the algorithm presented here as well as the presence of {\em edge states}, i.e., eigenfuctions which are localized at the boundary.  

The paper is organized as follows. Section \ref{Selfadjoint_extensions_of_Schrodinger_operators} is devoted to prove the general theorems that will allow for a convenient description of the self-adjoint extensions of the Schr\"odinger operator $H$ in 1D in terms of boundary data.  The main results of the finite element method for the eigenvalue problem for completely general self-adjoint extensions are discussed in section \ref{FEM}. An appropriate basis of nonlocal boundary functions is introduced there in order to implement the boundary conditions as described by the general theory. Moreover, in this section, results concerning the convergence and the stability of the numerical scheme are also proved. In section \ref{Numerical_experiments_and_conclusions} we present some numerical experiments displaying relevant features of the algorithm, as well as a comparison with other methods.


\section{Self-adjoint extensions of Schr\"odinger operators}\label{Selfadjoint_extensions_of_Schrodinger_operators}

As it was stated in the introduction, we will restrict our attention to the case of Schr\"odinger operators on 1D compact manifolds and regular potentials bounded below.  The Schr\"odinger operator for a particle moving on a smooth manifold $M$ with boundary $\partial M$ and Riemannian metric $\eta$ is given by the Hamiltonian operator $H$ defined in Eq.\ (\ref{hamiltonian}). 
Notice first that a compact 1D manifold $M$ consists of a finite number of closed intervals $I_\alpha$, $\alpha = 1,\ldots,n$.  Each interval will have the form $I_\alpha = [a_\alpha, b_\alpha] \subset \mathbb{R}$ and the boundary of the manifold $M = \coprod_{\alpha=1}^n [a_\alpha, b_\alpha]$ is given by the family of points $\partial M = \{ a_1, b_1, \ldots, a_n,b_n\}$.     Functions $\Psi$ on $M$ are determined by vectors $(\Psi_1, \ldots, \Psi_n)$ of complex valued functions $\Psi_\alpha \colon I_\alpha \to \mathbb{C}$.
A Riemannian metric $\eta$ on $M$ is given by specifying a Riemannian metric $\eta_\alpha$ on each interval $I_\alpha$, this is, by a positive smooth function $\eta_\alpha(x) > 0$ on the interval $I_\alpha$, i.e., $\eta|_{I_\alpha} = \eta_\alpha (x) dx^2$.    Then the $L^2$ inner product on $I_\alpha$ takes the form $\langle \Psi_\alpha , \Phi_\alpha \rangle = \int_{a_\alpha}^{b_\alpha} \overline{\Psi}_\alpha (x) \Phi_\alpha (x) \sqrt{\eta_\alpha (x)} dx$ and the Hilbert space of square integrable functions on $M$ is given by $L^2(M) = \bigoplus_{\alpha= 1}^n L^2(I_\alpha, \eta_\alpha)$.

On each subinterval $I_\alpha = [a_\alpha, b_\alpha]$ the differential operator $H_\alpha = H|_{I_\alpha}$ takes the form 
of a Sturm-Liouville operator 
\begin{equation}\label{sturm}
H_\alpha = - \frac{1}{W_\alpha} \frac{d}{dx} p_\alpha(x) \frac{d}{dx} + V_\alpha(x),
\end{equation}
with smooth coefficients $W_\alpha = 1/(2\sqrt{\eta_\alpha}) > 0$ (in what follows we are taking the physical constants $\hbar$ and $m$ such that $\hbar^2/2m = 1$), $p_\alpha(x) = 1/\sqrt{\eta_\alpha}$, and is formally self-adjoint in the sense that
\begin{equation}\label{laplace}
\int_M\overline{\Psi}\Delta_\eta\Phi\sqrt{\eta} dx=-\int_M \frac{d\overline{\Psi}}{dx}\frac{d\Phi}{dx}\sqrt{\eta}dx=\int_M (\overline{\Delta_\eta\Psi})\Phi\sqrt{\eta} dx,
\end{equation}
for $\Psi$, $\Phi\in\C_c^\infty(M\backslash \pM)$ complex-valued smooth functions with compact support in the interior of $M$. In fact, the differential expression \eqref{laplace} defines a symmetric operator on $L^2(M)$ with dense domain $\C_c^\infty(M\backslash\pM)$.   The operator $\Delta_\eta$ is closable on $L^2(M)$ and its minimal closed extension, that we will keep denoting by $\Delta_\eta$, has domain $D_0 := \mathcal{H}_0^2(M)$, the Sobolev space of order 2 on $M$ with functions and first derivatives vanishing at the boundary.  The adjoint operator  $\Delta_\eta^\dagger$ has dense domain $D_0^\dagger = \{ \Phi\in L^2(M) \mid \Delta_\eta\Phi  \in L^2(M) \}$ on $L^2(M)$.   In the particular instance of 1D the domain $D_0^\dagger$ of the adjoint operator $\Delta_\eta^\dagger$ is exactly given by $\mathcal{H}^2(M)$, the Sobolev space of order 2 on $M$, i.e., the space of functions on $L^2(M)$ possessing first and second weak derivatives that are square integrable (notice that in higher dimensions this is not true because the domain of the adjoint could contain functions which are not in $\mathcal{H}^2(M)$, see for instance \cite{Gr68}).   Clearly $D_0 = \mathcal{H}_0^2(M)  \subset  \mathcal{H}^2 (M) = D_0^\dagger$ and $\Delta_\eta \subset \Delta_\eta^\dagger$, thus the operator $\Delta_\eta$ is symmetric but not self-adjoint on $L^2(M)$.   Notice that the maximal extension of the operator $\Delta_\eta$, this is the operator $\Delta_\eta^\dagger$ defined in $D_0^\dagger$ is not self-adjoint either because $(D_0^\dagger)^\dagger = D_0 \subset D_0^\dagger$.

Hence we would like to determine whether there exist self-adjoint extensions of the symmetric operator $\Delta_\eta$, i.e., operators $\Delta_D$ with domain $D$ such that $D_0 \subset D \subset D_0^\dagger$, $\Delta_D\mid _{D_0} = \Delta_\eta$ and $\Delta_D = \Delta_D^\dagger$.
As it will become clear from the discussion below, there exist self-adjoint extensions of $\Delta_\eta$ and all of them are determined by appropriate boundary conditions for the functions in $D_0^\dagger = \mathcal{H}^2(M)$. 

Von Neumann's theorem establishes (see for instance \cite{We80}, Thm. 8.12) that there is a one-to-one correspondence between self-adjoint extensions $\Delta_D$ of the Laplace-Beltrami operator $\Delta_\eta$ and unitary operators $K \colon \mathcal{N}_+ \to \mathcal{N}_-$, where the deficiency spaces $\mathcal{N}_\pm$ are defined as:
\begin{equation}\label{deficiency}
 \mathcal{N}_\pm = \{ \Psi \in \mathcal{H}^2(M)  \mid  \Delta_\eta^\dagger \Psi = \pm i \Psi \} .
 \end{equation}
Thus, given a unitary operator $K\colon \mathcal{N}_+ \to \mathcal{N}_-$, the domain $D$ of the operator $\Delta_D$ is given by 
\begin{equation}\label{Neumanndomain}
D = D_0 \oplus (I + K)\mathcal{N}_+,\end{equation} 
and the extended operator $\Delta_D$ takes the explicit form:
\begin{equation}\label{neumann_op}
 \Delta_D (\Psi_0 \oplus (I+ K)\xi_+ ) = \Delta_\eta\Psi_0 \oplus i(I-K) \xi_+ ,
 \end{equation}
for all $\Psi_0 \in D_0$ and $\xi_+\in \mathcal{N}_+$.
 
Unfortunately von Neumann's theorem is not always suitable for the numerical computation of general self-adjoint extensions of the Laplace-Beltrami operator because one needs to determine first the deficiency spaces $\mathcal{N}_\pm$ and, moreover, the domain given by Eq.\ \eqref{Neumanndomain} is algebraically not well behaved.   We can take however a different route inspired in the classical treatment of formally self-adjoint differential operators.   If we rewrite Eq.\ (\ref{laplace}) for functions $\Psi$, $\Phi$ $\in D^\dagger = \mathcal{H}^2(M)$ instead of $C_c^\infty(M \backslash \partial M)$, a simple computation shows:

\begin{equation}\label{boundary}
 \int_M \overline{\Psi} \, \Delta_\eta \Phi \, \sqrt{\eta}dx = \int_M (\overline{\Delta_\eta \Psi})\,  \Phi \, \sqrt{\eta}dx + \int_{\partial M} \left( \overline{\psi} \dot{\varphi} - \overline{\dot{\psi}} \varphi \right) \vol_{\partial \eta},
\end{equation}
where $\vol_{\partial \eta}$ stands for the induced Riemannian metric at the boundary, $\varphi := \Phi\mid_{\partial M}$, $\dot{\varphi} := \frac{d \Phi}{d \nu}\mid_{\partial M}$, where $\nu$ is the exterior normal vector to $\pM$, and respectively for $\psi$ and $\dot{\psi}$. 

We thus obtain the Lagrange boundary form $\Sigma$ for the Laplace-Beltrami operator:
\begin{equation}\label{lagrange}
 \Sigma ((\psi,\dot{\psi}), (\varphi,\dot{\varphi})) = \int_{\partial M} \overline{\psi} \dot{\varphi}\;\vol_{\partial \eta} -\int_{\pM} \overline{\dot{\psi}} \varphi\; \vol_{\partial \eta} = \langle \psi, \dot\varphi \rangle_{L^2(\partial M)} - \langle \dot{\psi}, \varphi \rangle_{L^2(\partial M)} .
 \end{equation}
In what follows, if there is no risk of confusion, we will omit the subscript $L^2(\partial M)$ that denotes the $L^2$ inner product on the boundary manifold $\partial M$ with respect to the measure defined by the volume form $\vol_{\partial \eta}$ and we will simply write $\langle \psi, \varphi \rangle = \int_{\partial M} \overline{\psi} \varphi \,  \vol_{\partial \eta}$.    The Lagrange boundary bilinear form $\Sigma$ defines a continuous bilinear form on the Hilbert space $L^2(\partial M)\oplus L^2(\partial M)$,
\begin{equation}\label{lagrange2}
 \Sigma ((\psi_1,\psi_2), (\varphi_1,\varphi_2)) = \langle \psi_1, \varphi_2 \rangle - \langle \psi_2, \varphi_1 \rangle, \quad \quad \forall (\psi_1,\psi_2), (\varphi_1,\varphi_2) \in  L^2(\partial M)\oplus L^2(\partial M).
 \end{equation}
The trace map $\gamma \colon \mathcal{H}^2 (M) \to L^2(\partial M)\oplus L^2(\partial M)$ given by $\gamma (\Psi ) = (\psi, \dot{\psi})$, is continuous and induces an isomorphism  from $\mathcal{H}^2(M) / \ker \gamma$ onto $\H^{3/2}(\partial M) \oplus \H^{1/2}(\partial M) \subset L^2(\partial M)\oplus L^2(\partial M)$ (see for instance \cite{Ad75}, Thm. 7.20), where $\H^{3/2}$ and $\H^{1/2}$ are Sobolev spaces of fractional order.  The previous observations provide a simple characterization of self-adjoint extensions of the operator $\Delta_\eta$ in 1D.\\

\begin{theorem}\label{self1}
There is a one-to-one correspondence between self-adjoint extensions $\Delta_D$ of the Laplace-Beltrami operator $\Delta_\eta$ and non-trivial maximal closed isotropic subspaces $W$ of $\Sigma$ contained in $\H^{3/2}(\partial M) \oplus \H^{1/2}(\partial M)$.    The correspondence being explicitly given by $D \mapsto \gamma (D)$.
\end{theorem}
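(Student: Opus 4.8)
The plan is to set up a correspondence between self-adjoint extensions and maximal closed isotropic subspaces of the Lagrange form, proceeding in both directions. The starting point is the identity~\eqref{boundary}, which says that for $\Psi,\Phi\in\mathcal{H}^2(M)$ the defect from formal self-adjointness of $\Delta_\eta^\dagger$ is exactly the boundary term $\Sigma(\gamma\Psi,\gamma\Phi)$. From this I would first record the key reformulation: an extension $\Delta_D = \Delta_\eta^\dagger\mid_D$ defined on a domain $D$ with $D_0\subset D\subset D_0^\dagger$ is \emph{symmetric} precisely when $\Sigma(\gamma\Psi,\gamma\Phi)=0$ for all $\Psi,\Phi\in D$, i.e.\ when $W:=\gamma(D)$ is an isotropic subspace of $\Sigma$. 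Since $\ker\gamma = \mathcal{H}^2_0(M)=D_0$ (functions whose trace and normal derivative vanish), the map $D\mapsto \gamma(D)$ is a bijection between subspaces $D$ with $D_0\subset D\subset D_0^\dagger$ and subspaces $W\subset W^{3/2,2}(\partial M)\oplus W^{1/2,2}(\partial M)$, with inverse $W\mapsto \gamma^{-1}(W)$. This reduces the theorem to the claim that self-adjointness of $\Delta_D$ corresponds exactly to $W$ being \emph{maximal} among closed isotropic subspaces.

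The second step is to upgrade ``isotropic'' to the self-adjointness statement. A symmetric extension $\Delta_D$ is self-adjoint iff $D=D^\dagger$, where $D^\dagger=\{\Phi\in D_0^\dagger : \Sigma(\gamma\Psi,\gamma\Phi)=0\ \forall \Psi\in D\}$. Translating through $\gamma$, this reads: $\Delta_D$ is self-adjoint iff $\gamma(D)$ equals its $\Sigma$-orthogonal complement inside the boundary data space, i.e.\ iff $W$ is a \emph{maximal} isotropic subspace. Thus I would carefully verify that $W$ maximal isotropic is equivalent to $W=W^{\perp_\Sigma}$, and then that $W=W^{\perp_\Sigma}$ corresponds via $\gamma$ to $D=D^\dagger$. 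The closedness requirement on $W$ should be matched with the closedness of the domain $D$ (so that $\Delta_D$ is a closed operator), and here the continuity of $\gamma$ and the homeomorphism $\mathcal{H}^2(M)/\ker\gamma \cong W^{3/2,2}\oplus W^{1/2,2}$ cited from~\cite{Ad75} guarantee that $\gamma$ and its inverse transport closed subspaces to closed subspaces.

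The main obstacle, and the point demanding the most care, is the surjectivity/regularity bookkeeping at the level of Sobolev traces. The boundary form $\Sigma$ is defined on all of $L^2(\partial M)\oplus L^2(\partial M)$, but the trace map $\gamma$ does not surject onto this; its image is the smaller space $W^{3/2,2}(\partial M)\oplus W^{1/2,2}(\partial M)$. Hence ``maximal isotropic'' must be understood \emph{within} this trace space rather than within the full $L^2\oplus L^2$, and one must check that the correspondence $D\mapsto\gamma(D)$ really is onto the set of such maximal closed isotropic $W$ --- that is, every maximal closed isotropic $W$ in the trace space is of the form $\gamma(D)$ for a genuine self-adjoint domain $D$. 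This requires knowing that $\gamma^{-1}(W)$ is dense in $L^2(M)$ and that the resulting operator is indeed self-adjoint and not merely symmetric; the density and the matching of orthogonal complements are exactly where the homeomorphism theorem for $\gamma$ does the essential work. I would also note the \textbf{non--triviality} hypothesis on $W$: one must exclude degenerate choices (such as $W=\{0\}$, which fails to give a self-adjoint operator because the corresponding domain is too small), so the statement is restricted to non--trivial maximal closed isotropic subspaces, and I would verify that these are precisely the ones arising from self-adjoint extensions.
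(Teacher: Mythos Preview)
Your proposal is correct and follows essentially the same strategy as the paper: use the boundary identity~\eqref{boundary} to identify symmetry of $\Delta_D$ with isotropy of $\gamma(D)$, invoke the homeomorphism property of $\gamma$ from~\cite{Ad75} to match closedness, and then argue that self-adjointness corresponds to maximal isotropy. The only cosmetic difference is that you phrase the last step via the $\Sigma$-orthogonal complement ($W=W^{\perp_\Sigma}$) whereas the paper argues maximality directly by contradiction (a strictly larger isotropic $W'$ would yield a strictly larger symmetric domain $\gamma^{-1}(W')$); your treatment is in fact more careful than the paper's about the Sobolev-trace bookkeeping and the density issues, which the paper simply asserts are ``easy to see.''
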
\\

\begin{proof}  
Let $D\subset \mathcal{H}^2(M)$ be the domain of a self-adjoint extension $\Delta_D$ of the operator $\Delta_\eta$.  Consider the subspace $W := \gamma (D)  \subset \H^{3/2}(\partial M) \oplus \H^{1/2}(\partial M) \subset  L^2(\partial M)\oplus L^2(\partial M)$ consisting on the set of pairs of functions $(\varphi, \dot{\varphi})$ that are respectively the restriction to $\partial M$ of a function $\Phi\in D$ and its normal derivative.  Notice that the subspace $D$  is closed in $\mathcal{H}^2(M)$.  Hence, because $\gamma$ is an homeomorphism from $\mathcal{H}^2(M)/\ker \gamma$ to $\H^{3/2}(\partial M) \oplus \H^{1/2}(\partial M) $, $\gamma (D)$ is a closed subspace of $\H^{3/2}(\partial M) \oplus \H^{1/2}(\partial M) $.   Because of Eq.\ (\ref{boundary}), it is clear that $\Sigma\mid_{W} = 0$ and finally the subspace $W$ is maximally isotropic in $\H^{3/2}(\partial M) \oplus \H^{1/2}(\partial M)$, because if this were not the case, there will be a closed  isotropic subspace $W' \subset \H^{3/2}(\partial M) \oplus \H^{1/2}(\partial M)$  containing $W$.   Then $\gamma^{-1}(W')$ defines a domain $D'$ containing $D$ such that  the operator $\Delta_\eta\bigr|_{D'}$ would be symmetric on it, in contradiction with the self-adjointness assumption.

Let $W\subset L^2(\partial M)\oplus L^2(\partial M)$ be a maximal closed $\Sigma$-isotropic subspace in $\H^{3/2}(\partial M) \oplus \H^{1/2}(\partial M)$.  Then consider the subspace $D_W : =  \gamma^{-1}(W) \subset \mathcal{H}^2(M)$ of functions $\Psi$ such that $(\psi, \dot{\psi}) \in W$.   It is clear that for any pair of functions $\Psi$, $\Phi$ on $D_W$, because $W$ is isotropic with respect to $\Sigma$, Eq.\ (\ref{boundary}) gives $\langle \Psi, \Delta_\eta \Phi \rangle = \langle \Delta_\eta \Psi,\Phi \rangle$ and the operator $\Delta_\eta$ is therefore symmetric in $D_W$.  Moreover, because of the maximality of $W$ in $\H^{3/2}(\partial M) \oplus \H^{1/2}(\partial M)$  it is easy to see that $D_W = D_W^\dagger$, hence it is self-adjoint.   
\end{proof}\\
 
The key observation now is that, because $\operatorname{dim} M = 1$, $\H^{q}(\pM)=L^2(\pM)=\mathbb{C}^{2n}, \forall q$, where $n$ is the number of subintervals. Hence, the self-adjoint extensions of the Laplace-Beltrami operator are
in one-to-one correspondence with the maximal isotropic subspaces of the Lagrange boundary form $\Sigma$.  It is a well known result (see \cite{As05}, \cite{Br08} and references therein) that the maximal isotropic subspaces of the bilinear form \eqref{lagrange2} are in one-to-one correspondence with the graphs of unitary operators $U \colon L^2(\pM)\to L^2(\pM)$ through the relationship $$\varphi_1-i\varphi_2=U(\varphi_1+i\varphi_2).$$
This result is easily derived by realizing that the unitary transformation $C \colon   L^2(\partial M)\oplus L^2(\partial M) \to L^2(\partial M)\oplus L^2(\partial M)$, defined by $C(\varphi, \dot{\varphi}) = \bigl(\frac{1}{\sqrt{2}}(\varphi+i\dot{\varphi}), \frac{1}{\sqrt{2}}(\varphi-i\dot{\varphi})\bigr)$, transforms the Lagrange bilinear form  $\Sigma$ into the bilinear form:
\begin{equation}\label{graph_uni}
 \tilde{\Sigma}((\varphi_+,\varphi_-),(\psi_+,\psi_-))= -i \left[  \langle \varphi_+,\psi_+\rangle - \langle \varphi_-,\psi_-\rangle \right] ,
 \end{equation}
where the notation $\varphi_\pm = \frac{1}{\sqrt{2}}(\varphi\pm i \dot{\varphi})$ has been used.  Then it follows immediately that 
maximal isotropic closed subspaces of $\Sigma$ are mapped by $C$ into graphs of unitary operators of $L^2(\partial M)$. \\

The last step in discussing all self-adjoint extensions of the Hamiltonian $H$ consists in realizing that because $M$ is compact the operator multiplication by a regular function is essentially self-adjoint and its unique self-adjoint extension has domain $L^2(M)$.  Hence, the self-adjoint extensions of $H$ coincide with the self-adjoint extensions of $\Delta_\eta$.

We can summarize the preceding analysis by stating that, under the conditions above, the domain of a self-adjoint extension of the Schr\"odinger operator $H$ in 1D is defined as a closed subspace of functions $\Psi$ on $\mathcal{H}^2(M)$ satisfying:
\begin{equation}\label{asorey}
\psi - i\dot{\psi} = U (\psi + i \dot{\psi})
\end{equation}
for a given unitary operator $U \colon \mathbb{C}^{2n} \to \mathbb{C}^{2n}$.   Equation (\ref{asorey}) represent general quantum Kirchoff's rules, as they define the most general self-adjoint extension of the quantum free motion on a family of intervals.  
Notice for instance that $U = I$ corresponds to Neumann's boundary conditions and $U = - I$ determines Dirichlet's boundary conditions.  Other unitary matrices $U$ will define different graphs, but many other will not.  The formula above, Eq.\ (\ref{asorey}), provides a powerful and effective computational tool to deal with general self-adjoint extensions of Schr\"odinger operators, (even in dimensions higher than 1), and will be used extensively in the rest of the paper.

 Once we have determined a self-adjoint extension $H_U$ of the Schr\"odinger operator $H$, we can obtain its unitary evolution as indicated in the Introduction by computing the flow $U_t = \exp(-itH_U/\hbar )$.   It is well-known that the Dirichlet extension of the Laplace-Beltrami operator has a pure discrete spectrum because of the compactness of the manifold and the ellipticity of the operator, hence all self-adjoint extensions have a pure discrete spectrum (see \cite{We80}, Thm. 8.18).  Then the spectral theorem for the self-adjoint operator $H_U$ states:
$$ H_U = \sum_{k = 1}^\infty \lambda_k P_k ,$$
where $P_k$ is the orthogonal projector onto the finite-dimensional eigenvector space $V_k$ corresponding to the eigenvalue $\lambda_k$.  The unitary flow $U_t$ is given by:
$$ U_t = \sum_{k = 1}^\infty e^{-it\lambda_k / \hbar} P_k .$$
Hence all that remains to be done is to solve the eigenvalue problem:
\begin{equation}\label{eigenvalue_problem}
H_U \Psi = \lambda \Psi ,
\end{equation}
for the Schr\"odinger operator $H_U$.     We devote the rest of this paper to provide an efficient numerical algorithm to solve Eq.\ (\ref{eigenvalue_problem}).


\section{Finite element method for the eigenvalue problem in the interval}\label{FEM}

Our aim now is to find a numerical algorithm to solve the spectral problem for the self-adjoint extension of the 1D Schr\"odinger operator $H$ determined by the unitary matrix $U \in U(2n)$ in (\ref{eigenvalue_problem}) and given explicitly  by the differential system:  
\begin{equation}\label{schrodinger}
\left\{	
	\begin{array}{l}
		- \Delta_\eta \Phi   + \mathcal{V}\Phi  =\lambda \Phi, \quad  \Phi\in \H^2(M)\,,     \\ \\ 
		\varphi-i\dot{\varphi}=U(\varphi+i\dot{\varphi}),\quad  \varphi=\Phi\bigr|_{\partial M}, \quad  \dot{\varphi}=\frac{d\Phi}{d\nu}\bigr|_{\partial M},\quad U\in U(2n). \\
	\end{array} \right.  .
\end{equation}

From now on we will keep the \emph{dot} notation for the outward normal derivative at the boundary $\dot{\varphi}=\frac{d\Phi}{d\nu }\bigr|_{\partial M}$ and use \emph{primes} to denote the standard derivative $\Phi'=\frac{d \Phi}{dx}$.  Because of the particularly simple form that the Lagrange boundary form takes in one dimension, it will be convenient to introduce the following notation (check with the right-hand side of  (\ref{lagrange})):
$$ [\overline{\Psi}\Phi']\bigr|_{\partial M} : = \langle \psi, \dot\varphi \rangle_{L^2(\partial M)} = \sum_{\alpha=1}^n \overline{\Psi}(b_\alpha)\Phi'(b_\alpha)-\overline{\Psi}(a_\alpha)\Phi'(a_\alpha) .$$
It is convenient to look for weak solutions of \eqref{schrodinger}.
Taking the inner product of \eqref{schrodinger} with a vector $\Psi$ on the dense domain $C^\infty (M)$ of smooth functions on $M$ and integrating by parts we obtain easily  the corresponding weak problem:
\begin{equation}\label{weak}
\left\{\begin{array}{l}	
	\langle \Psi' ,\Phi'\rangle-[\overline{\Psi}\Phi']_{\partial M}+\langle \Psi ,\mathcal{V}\Phi\rangle = \lambda\langle \Psi , \Phi \rangle \quad \forall \Psi\in C^\infty(M)\,, \\ \\
	\varphi-i\dot{\varphi}=U(\varphi+i\dot{\varphi}),\quad  \varphi=\Phi\bigr|_{\partial M}, \quad  \dot{\varphi}=\frac{d\Phi}{d\nu}\bigr|_{\partial M},\quad U\in U(2n). \\
	\end{array}\right.
\end{equation}
It is easy to show that $(\Phi, \lambda)$ is a solution of \eqref{schrodinger} iff it solves \eqref{weak}. In the next subsections we develop an algorithm that approximates such solutions and, moreover, we prove its stability and convergence.

\subsection{Finite elements for general self-adjoint boundary conditions}\label{finiteelementsforgeneralsabc}

To solve \eqref{weak} we construct a family of finite-dimensional subspaces of functions $S^N$ of $L^2(M)$ satisfying the boundary conditions \eqref{asorey}. Such finite-dimensional subspaces will be constructed using finite elements.   The finite element model $(K,\mathcal{P}, \mathcal{N})$ that we use is given by $K = [0,1]$ the unit interval, $\mathcal{P}$ the space of linear polynomials on $K$, and $\mathcal{N}$ the vertex set $\{ 0, 1\}$.     

The domain of our problem is the manifold $M$ which consists of the disjoint union of the intervals $I_\alpha = [a_\alpha, b_\alpha]$, $\alpha = 1, \ldots, n$.   For each $N$ we will construct a nondegenerate subdivision $M^N$ as follows. Let $r_\alpha$ be the integer defined as $r_\alpha = \left[ L_\alpha N /L \right] + 1$, where $[ x ]$ denotes the integer part of $x$, $L_\alpha = |b_\alpha - a_\alpha|$, and $L = L_1 + \ldots + L_n$.   We will assume that each $r_\alpha \geq 2$, and $N \geq 2n$.
  Let us denote by $r$ the multi index $(r_1, \ldots, r_n)$.  Then $|r| = r_1+ \cdots + r_n$ satisfies
$$ N \leq	|r | \leq N + n .$$
Now we will subdivide each interval $I_\alpha$ into $r_\alpha + 1$ subintervals of length: 
$$h_\alpha = \frac{L_\alpha} {r_\alpha + 1} .$$
The nondegeneracy condition supposes that it exists $\rho>0$ such that for all $I_{\alpha,a}\in M^N$ and for all $N>1$ $$\operatorname{diam}B_{I_{\alpha,a}}\geq \rho\operatorname{diam} I_{\alpha,a},$$ where $B_{I_{\alpha,a}}$ is the largest ball contained in $I_{\alpha,a}$ such that $I_{\alpha,a}$ is star shaped with respect to $B_{I_{\alpha,a}}$. In our particular case this is satisfied trivially since $$\operatorname{diam}B_{I_{\alpha,a}}=\operatorname{diam} I_{\alpha,a}\;.$$
(It would be possible to use a set of independent steps $h_\alpha$, one for each interval; however, this could create some
technical difficulties later on that we prevent in this way.)
Each subinterval $I_\alpha$ contains $r_\alpha + 2$ nodes that will be denoted as
$$x^{(\alpha)}_k = a_\alpha + h_\alpha k, \quad  k = 0, \ldots, r_\alpha +1 .$$ 

\subsubsection{Bulk functions}
Consider now the family $\mathcal{F}_r$ of  $|r|-2n$ piecewise linear functions $\{f_k^{(\alpha)}(x)\}_{k=2}^{r_\alpha- 1}$, $\alpha = 1, \ldots, n,\, 2 \leq k \leq r_\alpha - 1$, that are zero at all nodes except at the $k$th node of the interval $I_\alpha$, where it has value $1$, or more explicitly,
$$ f^{(\alpha)}_k (x) = \left\{   \begin{array}{ll}
s\,, & x = x^{(\alpha)}_{k-1} + s h_\alpha, \quad 0 \leq s \leq 1, \\
1-s\,, & x = x^{(\alpha)}_{k} + s h_\alpha, \quad 0 \leq s \leq 1, \\
0\,, & \mathrm{otherwise}\;.
\end{array}
\right. $$

Notice that these functions are differentiable on each subinterval.   All these functions satisfy trivially the boundary conditions \eqref{asorey} because they and their normal derivatives vanish at the endpoints of each interval.  They are localized around the inner nodes of the intervals.  We will call these functions \emph{bulk functions}.

\subsubsection{Boundary functions}\label{sectionboundaryfunctions}
We will add to the set of bulk functions a family of functions that implement nontrivially the boundary conditions determining the self-adjoint extension. These functions will be called \emph{boundary functions} and the collection of all of them will be denoted by $\mathcal{B}_r$.   
Contrary to bulk functions, boundary functions need to be ``delocalized'' so that they can fulfill any possible self-adjoint extension's boundary condition. 


\begin{figure}[ht]
	\centering
	\includegraphics[height=4cm]{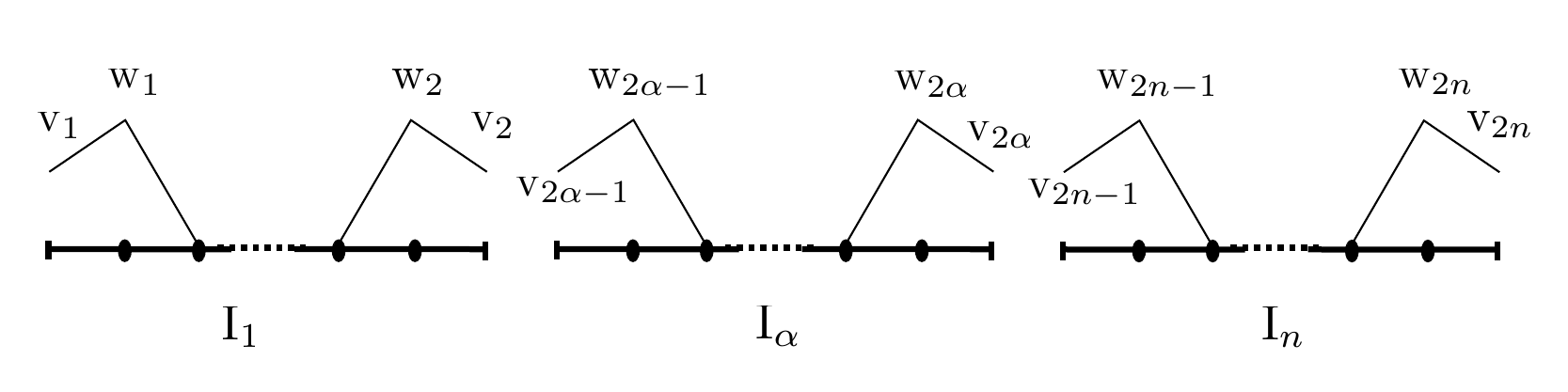}
	\caption{Boundary function $\beta^{(w)}$.}\label{function_beta}
\end{figure}


Because the endpoints $x_0^{(\alpha)} = a_\alpha$, $x_{r_\alpha + 1}^{(\alpha)} = b_\alpha$ of the intervals $I_\alpha$ and the adjacent nodes, $x_1^{(\alpha)}$ and $x_{r_\alpha}^{(\alpha)}$, are going to play a prominent role in what follows, we introduce some notation that takes care of them.   We will consider an index $l = 1, \ldots, 2n$ that labels the endpoints of the intervals. Now for each vector $w = (w_l) \in \mathbb{C}^{2n}$ consider the following functions (see Figure \ref{function_beta}):
$$ \beta^{(w)} (x) = \left\{   \begin{array}{ll}
v_{2\alpha -1} + s(w_{2\alpha -1} - v_{2\alpha -1})\,, & x = x^{(\alpha)}_{0} + s h_\alpha\,,  \\
w_{2\alpha -1}(1-s)\,, & x = x^{(\alpha)}_{1} + s h_\alpha\,, \\
s \,w_{2\alpha} \,, & x = x^{(\alpha)}_{r_\alpha -1} + s h_\alpha\,,  \\
w_{2\alpha} + s(v_{2\alpha} - w_{2\alpha}) \,, & x = x^{(\alpha)}_{r_\alpha} + s h_\alpha, \,,\\
0\,, & x_2^{(\alpha)} \leq x \leq x_{r_\alpha -1}^{(\alpha)} \,,
\end{array}\quad \begin{array}{c} 0 \leq s \leq 1\,,\\1 \leq \alpha \leq n\;.   \end{array}
\right. $$
Each function of the previous family is determined (apart from the vector $w$) by the vector $v = (v_l) \in \mathbb{C}^{2n}$ that collects the values of $\beta^{(w)}$ at the endpoints of the subintervals.    If we denote by $w^{(k)}$ the vectors such that $w^{(k)}_l = \delta_{lk}$, $k = 1, \ldots, 2n$, the $2n$ vectors $w^{(k)}$ are just the standard basis for $\mathbb{C}^{2n}$.   The corresponding functions $\beta^{(w)}$ will now be denoted simply by $\beta^{(k)}$. Notice that each boundary function $\beta^{(k)}$ is completely characterized by the unique nonextremal node where it does not vanish\footnote{If $k = 2\alpha-1$, it is the node $x^{(\alpha)}_1$, and if $k = 2\alpha$, it corresponds to the node $x^{(\alpha)}_{r_\alpha}$.}  and the values at the endpoints.  We denote by $v_l^{(k)}$, $l = 1, \ldots, 2n$, the boundary values of the functions $\beta^{(k)}$ above.

\subsubsection{The boundary matrix}
The $2n$ extremal values $v^{(k)}_{l}$ of the  boundary functions $\beta^{(k)}$ are undefined, but we are going to show that the $2n$ conditions \eqref{asorey} imposed on the boundary functions constitute a determinate system of linear equations for them. 
Because the boundary functions are constructed to be piecewise linear, the normal derivatives of these functions at the boundary can be obtained easily.   For the left boundaries of the intervals, i.e., at the points $a_\alpha$, we have

\begin{equation} \label{normal_left}
\left. \frac{d\beta^{(k)}}{d\nu}\right|_{x = a_{\alpha}} = - \frac{1}{h_\alpha}(w^{(k)}_{2\alpha-1} - v^{(k)}_{2\alpha -1}),
\end{equation}
and respectively, for the right boundaries,

\begin{equation}\label{normal_right}
\left. \frac{d\beta^{(k)}}{d\nu}\right|_{x=b_{\alpha}} = \frac{1}{h_\alpha} (v^{(k)}_{2\alpha} - w^{(k)}_{2\alpha}) = -\frac{1}{h_\alpha} (w^{(k)}_{2\alpha} - v^{(k)}_{2\alpha}) .
\end{equation}
Thus the vector containing the normal derivatives of the function $\beta^{(k)}$, consistently denoted by $\dot{\beta}^{(k)}$, is given by
\begin{equation}\label{derivatives}
\dot{\beta}^{(k)}_l = - \frac{1}{h_l}(w^{(k)}_l - v^{(k)}_l) =  - \frac{1}{h_l}(\delta_{lk} - v^{(k)}_l) ,
\end{equation}
where we use again the consistent notation $h_l = h_\alpha$, if $l = 2\alpha -1$, or $l = 2\alpha$.
For each boundary function $\beta^{(k)}$, the boundary conditions \eqref{asorey} read simply as the system of $2n$ equations on the components of the vector $v^{(k)}$,
$$ v^{(k)} - i \dot{\beta}^{(k)} = U(v^{(k)}  + i \dot{\beta}^{(k)} )  ,$$
or, componentwise,
$$ v^{(k)}_l \left(1-\frac{i}{h_l} \right) + \frac{i}{h_l} w^{(k)}_l = \sum_{j = 1}^{2n}U_{l j} \left[v^{(k)}_j \left(1 + \frac{i} {h_j}\right) - \frac{i}{h_j}w^{(k)}_j \right], \quad l = 1, \ldots, 2n .$$
Collecting coefficients and substituting the expressions $w^{(k)}_l=\delta_{lk}$ we get
\begin{equation}\label{boundaryvalues}
\sum_{j=1}^{2n}\left[\left(1-\frac{i}{h_j}\right)\delta_{lj}-U_{lj}\left(1+\frac{i}{h_j}\right) \right] v^{(k)}_{j} = \left[ -\frac{i}{h_k}\delta_{lk}-U_{lk}\frac{i}{h_k}\right].
\end{equation}
This last equation can be written as the matrix linear system:
\begin{equation}\label{boundary_equation}
 F V = C
 \end{equation}
with $V$ a $2n \times 2n$ matrix whose entries are given by $V_{jk} = v^{(k)}_j$, $j,k = 1, \ldots, 2n$. The $k$th column of $V$ contains the boundary values of the boundary function $\beta^{(k)}$.  The $2n\times 2n$ matrix $F$ with entries 
$$F_{lj} = \left(1-\frac{i}{h_j}\right)\delta_{lj}-U_{lj}\left(1+\frac{i}{h_j}\right) ,$$
will be called the boundary matrix  of the subdivision of the domain $M$ determined by the integer $N$,
and 
$$ C_{lk} = -\frac{i}{h_k} (\delta_{lk}  +  U_{lk}) $$
defines the inhomogeneous term of the linear system \eqref{boundary_equation}.
Using a compact notation we get
$$ F = \mathrm{diag} (\mathbf{1} - i/\mathbf{h}) - U \mathrm{diag} (\mathbf{1} + i/\mathbf{h}) , \quad \quad C = -i\,  (I  + U) \mathrm{diag}(1/\mathbf{h})  $$
where $1/\mathbf{h}$ denotes the vector whose components are $1/h_l$.
Notice that $F$ depends just on $U$ and the integer $N$ defining the size of the discretization leading to the approximate weak spectral problem defined below.
 
\subsection{Conditioning of the boundary matrix}\label{conditioningoftheboundarymatrix}
Before addressing the construction of the approximate spectral problem we will study the behavior of system \eqref{boundary_equation} under perturbations; in other words, we will compute the condition number of the boundary matrix $F$ and show that it is small enough to ensure the accuracy of the numerical determination of our family of boundary functions $\beta^{(i)}$.   The relative condition number we want to compute is
 $$ \kappa (F) = \norm{F}\norm{F^{-1}}.$$ 
In our case, the boundary matrix $F$ can be expressed as 
$$F = \bar{D} - U D = (I - U D \bar{D}^{-1}) \bar{D}$$ 
with $D_{jk} (\mathbf{h}) = D_{jk} = (1+\frac{i}{h_j})\delta_{jk}$.    Notice that the product $UD\bar{D}^{-1}$ is a unitary matrix which we will denote as $U_0(\mathbf{h})$ or simply $U_0$ if we do not want to emphasize the $\mathbf{h}$ dependence of $U_0$.  Thus, $F = (I - U_0)\bar{D}$ and
\begin{equation*}
\norm{F} = \norm{(I-U_0)\bar{D}} \leq \norm{I-U_0}  \norm{\bar{D}} \leq 2\norm{D}.
\end{equation*}
On the other hand, 
\begin{equation*}
\norm{F^{-1}} = \norm{\bar{D}^{-1}(I-U_0)^{-1}} \leq \norm{\bar{D}^{-1}}  \norm{(I-U_0)^{-1}} = \frac{\norm{D^{-1}}} {\min_{\lambda \in \operatorname{spec}(U_0)} \{ |1-\lambda| \} }
\end{equation*}
and thus we obtain
\begin{equation}\label{KF1}
 \kappa (F) \leq \kappa (D) \frac{2}{\min_{\lambda \in \operatorname{spec}(U_0)} \{ |1-\lambda| \} } .
 \end{equation}
As $D$ is a diagonal matrix its condition number is given by
$$\kappa (D) = \frac{\sqrt{\frac{1}{h^2_{\mathrm{min}}}+1}}{\sqrt{\frac{1}{h^2_{\mathrm{max}}}+1}} \leq \frac{h_{\max}}{h_{\min}}$$ 
with $h_{\max}$ ($h_{\min}$) the biggest (smallest) step of the discretization determined by $N$.  We get finally,
\begin{equation}\label{KF}
\kappa (F) \leq \frac{h_{\max}}{h_{\min}}\frac{2}{|1-\lambda|}
\end{equation}
with $\lambda$ the closest element of the spectrum of $U_0$ to $1$.   Of course, because $U_0$ is unitary, it may happen that $1$ is in its spectrum, so that the condition number is not bounded.  
Because the matrix $U_0$ depends on $\mathbf{h}$, its eigenvalues will depend on $\mathbf{h}$ too.   
We want to study the dependence of the closest eigenvalue to 1, or 1 for that matter, with respect to perturbations of the vector $\mathbf{h}$. \\

\begin{lemma}\label{perturbation_1}
Suppose that $X_0$ is an eigenvector with eigenvalue $1$ of $U_0$ and that the perturbed matrix $\hat{U} = U_0+\delta U$, for $\norm{\delta U}$ small enough, is such that $1 \in \sigma (\hat{U})$. Then $\bar{X}_0^T\delta U X_0 = 0$ to first order in $\delta U$.
\end{lemma}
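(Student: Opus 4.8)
The plan is to treat this as first-order (Rayleigh--Schr\"odinger) perturbation theory for the eigenvalue $1$, exploiting the fact that $U_0$ is unitary so that $X_0$ is simultaneously a right \emph{and} a left eigenvector for this eigenvalue. This dual role of $X_0$ is what makes the computation collapse to the desired orthogonality condition.

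First I would record the key algebraic fact. Since $U_0$ is unitary we have $U_0^{-1} = U_0^H$, and from $U_0 X_0 = X_0$ it follows immediately that $U_0^H X_0 = U_0^{-1} X_0 = X_0$. Taking the Hermitian adjoint of this identity gives $X_0^H U_0 = X_0^H$; that is, $X_0$ is also a left eigenvector of $U_0$ with eigenvalue $1$, equivalently $X_0^H (U_0 - I) = 0$. Next I would write down the eigenvalue equation for the perturbed matrix: by hypothesis $1 \in \sigma(\hat U)$ with $\hat U = U_0 + \delta U$, so there is an eigenvector $\hat X$ satisfying $\hat U \hat X = \hat X$. For $\|\delta U\|$ small I would expand $\hat X = X_0 + \delta X$, where $\delta X$ is the first-order correction to the eigenvector and is of order $\|\delta U\|$. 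Substituting into $(U_0 + \delta U)(X_0 + \delta X) = X_0 + \delta X$, using $U_0 X_0 = X_0$, and discarding the second-order term $\delta U\,\delta X$, I obtain the linearized relation
$$(U_0 - I)\,\delta X = -\,\delta U\, X_0 \qquad \text{to first order.}$$

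Finally I would project this relation onto $X_0$ by multiplying on the left by $X_0^H$. The left-hand side vanishes identically because $X_0^H (U_0 - I) = 0$ by the first step, leaving $0 = - X_0^H \delta U X_0$, which is exactly the claim $X_0^H \delta U X_0 = 0$ to first order in $\delta U$.

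The point requiring the most care is the legitimacy of the expansion $\hat X = X_0 + \delta X$: one must ensure that the eigenvalue $1$ of $\hat U$ genuinely connects to the eigenvalue $1$ of $U_0$ along a branch whose zeroth-order eigenvector is $X_0$. If $1$ is a \emph{simple} eigenvalue of $U_0$, continuity of the spectrum under small perturbations guarantees that for $\|\delta U\|$ small the only eigenvalue of $\hat U$ near $1$ is the perturbed one, its eigenvector depends smoothly on $\delta U$, and $\delta X$ is well defined and $O(\|\delta U\|)$; the argument then goes through verbatim. (In a degenerate case one would first have to select within the eigenspace the eigenvector that survives, which is the standard subtlety of degenerate perturbation theory.) Everything else is routine linearization, and the unitarity of $U_0$ supplies the one structural ingredient that does the real work.
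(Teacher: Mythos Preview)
Your proof is correct and follows essentially the same approach as the paper: expand the perturbed eigenvector as $\hat X = X_0 + \delta X$, use unitarity of $U_0$ to deduce that $X_0^H U_0 = X_0^H$ (so $X_0$ is a left eigenvector), multiply the linearized eigenvalue equation on the left by $X_0^H$, and discard the second-order term $\delta U\,\delta X$. Your discussion of the legitimacy of the expansion in the simple versus degenerate case is a welcome addition that the paper does not make explicit.
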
\\
\begin{proof} Clearly, if $1 \in \sigma (\hat{U})$ and $\delta U$ is small enough, there exist a vector $\hat{X} = X_0 + \delta X$,
with $\norm{\delta X} \leq C \norm{\delta U}$, such that $\hat{U} \hat{X} = 1 \hat{X}$.  Then we have
\begin{equation}
U_0\delta X +\delta U X_0+\delta U \delta X = \delta X\notag .
\end{equation}
Because $U_0 X_0 = X_0$ and $U_0$ is unitary, $\bar{X}_0^T U_0 = \bar{X}_0^T$, and then
multiplying on the left by $\bar{X}_0^T$ and keeping only first order terms, we get the desired condition:
$\bar{X}_0^T\delta U X_0 = 0.$
\end{proof}

\medskip

Because of the previous lemma, if $\hat{U} = U_0 + \delta U$ is a unitary perturbation of $U_0$ such that $\bar{X}_0^T\delta U X_0 \neq 0$, 
for any eigenvector $X_0$ with eigenvalue 1, then $1 \notin \sigma(\hat{U})$.   Now if we consider a unitary perturbation $\hat{U}$ of $U_0$ such that $1 \notin \sigma (\hat{U})$ we want to estimate how far 1 is from the spectrum of $\hat{U}$.
Consider the eigenvalue equation for the perturbed matrix.  The perturbed eigenvalue $\hat{\lambda} = 1 + \delta \lambda$ will satisfy
\begin{equation}\label{eigenvalueunitary}
(U_0+\delta U)(X_0+\delta X)=(1+\delta \lambda)(X_0+\delta X)\;.
\end{equation}
Multiplying on the left by $\bar{X}_0^T$ and solving for $|\delta\lambda|$ it follows that 
\begin{equation}\label{bounddelta}
|\delta \lambda|\geq \frac{|X^H_0\delta U X_0 + X^H_0\delta U \delta X|}{|1+X^H_0\delta X|}\geq  \frac{|X^H_0\delta U X_0|-|X^H_0\delta U \delta X|}{1+\norm{\delta X}} 
\end{equation}
for $\norm{\delta U}$ small enough.
Taking into account the particular form of the matrix $U_0 = U D \bar{D}^{-1}$ we have that $\delta U = U\delta(D\bar{D}^{-1})$ and therefore $\norm{\delta U}=\norm{\delta (D\bar{D}^{-1})}$.   Moreover, because $\delta(D\bar{D}^{-1})$ is a diagonal matrix,
$\delta(D\bar{D}^{-1})_{kk} = \frac{-2i}{ (h_k - i)^2}\delta h_k$ and its singular values are the modulus of its diagonal entries.  Hence we have the following proposition.\\

\begin{proposition}\label{lower_bound} Given the matrix $U_0 = UD\bar{D}^{-1}$ with eigenvalue 1, then 1 is not an eigenvalue of any unitarily perturbed matrix $U + \delta U = U_0 (D \bar{D}^{-1} +  \delta(D \bar{D}^{-1}))$ with $\norm{\delta(D\bar{D}^{-1})}$ small enough and $\delta h = \min\{ |\delta h_k| \} > 0$.  Moreover there exists a constant $C>0$ such that the perturbation $\delta \lambda$ of such eigenvalue satisfies the lower bound,
$$|\delta \lambda|\geq \frac{\sigma_{\min}(\delta (D\bar{D}^{-1}))-C\sigma^2_{\max}(\delta (D\bar{D}^{-1}))}{1+C\sigma_{\max}(\delta (D\bar{D}^{-1}))} > 0.$$
\end{proposition}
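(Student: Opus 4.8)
The plan is to take the lower bound for $|\delta\lambda|$ already obtained in \eqref{bounddelta},
\[
|\delta\lambda| \geq \frac{|X_0^H\delta U X_0| - |X_0^H\delta U\,\delta X|}{1 + \|\delta X\|},
\]
and to control each of the three quantities appearing in it, $|X_0^H\delta U X_0|$, $|X_0^H\delta U\,\delta X|$ and $\|\delta X\|$, purely in terms of the singular values of the diagonal matrix $\delta(D\bar D^{-1})$. Throughout I would use that $\delta U = U\,\delta(D\bar D^{-1})$ with $U$ unitary, so that $\|\delta U\| = \|\delta(D\bar D^{-1})\| = \sigma_{\max}(\delta(D\bar D^{-1}))$, and that the diagonal structure makes the singular values of $\delta(D\bar D^{-1})$ equal to the moduli $\sigma_k = 2|\delta h_k|/(h_k^2+1)$ of its entries, as already noted before the statement.

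The two easy estimates come first. From Lemma \ref{perturbation_1} (more precisely from the bound $\|\delta X\|\leq C\|\delta U\|$ recorded there) together with $\|\delta U\| = \sigma_{\max}$ one gets $\|\delta X\| \leq C\sigma_{\max}$, which bounds the denominator by $1 + C\sigma_{\max}$. The mixed term is then controlled by Cauchy--Schwarz, $|X_0^H\delta U\,\delta X| \leq \|\delta U\|\,\|\delta X\| \leq C\sigma_{\max}^2$. Both reproduce the $\sigma_{\max}$-contributions in the claimed inequality.

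The substantial step is the lower bound $|X_0^H\delta U X_0| \geq \sigma_{\min}$. Here I would exploit the eigenrelation $U D\bar D^{-1}X_0 = X_0$: taking adjoints and using that $D\bar D^{-1}$ is a \emph{diagonal unitary} matrix (its entries have modulus one) gives $X_0^H U = X_0^H\bar D D^{-1}$. Substituting into $X_0^H\delta U X_0 = X_0^H U\,\delta(D\bar D^{-1})X_0$ yields
\[
X_0^H\delta U X_0 = X_0^H\,\bar D D^{-1}\delta(D\bar D^{-1})\,X_0 .
\]
Both $\bar D D^{-1}$ and $\delta(D\bar D^{-1})$ are diagonal, and a direct computation of the product of their $k$-th entries,
\[
\frac{h_k - i}{h_k + i}\cdot\frac{-2i\,\delta h_k}{(h_k - i)^2} = \frac{-2i\,\delta h_k}{h_k^2+1},
\]
shows that each diagonal entry is $-i$ times a real number. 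Hence $X_0^H\delta U X_0 = -i\sum_k |X_{0,k}|^2\,2\delta h_k/(h_k^2+1)$, and since $\|X_0\|=1$ its modulus is $\sum_k |X_{0,k}|^2\sigma_k \geq \sigma_{\min}$.

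Assembling the three estimates in \eqref{bounddelta} gives the stated bound, and as soon as $\sigma_{\min} > C\sigma_{\max}^2$ the right-hand side is strictly positive, so $\delta\lambda\neq 0$ and $1\notin\sigma(\hat U)$, in agreement with the contrapositive of Lemma \ref{perturbation_1}. The main obstacle is precisely the last estimate: the clean inequality $|X_0^H\delta U X_0|\geq\sigma_{\min}$ rests on the fact that every diagonal entry above is purely imaginary of the form $-i\cdot(\text{real})$, so the modulus of the sum equals $\sum_k|X_{0,k}|^2\sigma_k$ \emph{only when the real numbers $\delta h_k$ do not change sign}; otherwise cancellation could shrink it below $\sigma_{\min}$. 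The argument therefore implicitly uses that the mesh perturbation $\delta\mathbf{h}$ is sign-definite, which is the natural situation for a refinement in which all step sizes move in the same direction.
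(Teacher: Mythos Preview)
Your argument follows the same route as the paper's proof: start from the inequality \eqref{bounddelta}, control $\|\delta X\|$ and the mixed term by $\|\delta U\|=\sigma_{\max}$, and bound $|X_0^H\delta U X_0|$ from below using the diagonal form of $\delta(D\bar D^{-1})$. Your treatment of the lower bound is in fact more explicit than the paper's: the paper simply asserts $\bigl|\sum_k |X_{0,k}|^2\,\delta(D\bar D^{-1})_{kk}\bigr|\geq 2\,\delta h_{\min}$, whereas you use the eigenrelation $X_0^H U = X_0^H\bar D D^{-1}$ to reduce $X_0^H\delta U X_0$ to $-i$ times a real combination $\sum_k |X_{0,k}|^2\,2\delta h_k/(h_k^2+1)$, from which the bound $\geq\sigma_{\min}$ is transparent. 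Your closing remark that this step requires the $\delta h_k$ to have a common sign is correct and is an implicit hypothesis in the paper as well (it is the natural case of a uniform refinement $N\mapsto N'$).
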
\\

\begin{proof}
Because of Lemma \ref{perturbation_1}  it is sufficient to show that $\bar{X}_0^T \delta(D\bar{D}^{-1})) X_0 \neq 0$.
But this is an easy consequence of the fact that $|\sum_{i= 1}^{2n} (|X_{0,i}|^2 \delta(D\bar{D}^{-1}))_{ii}| \geq 2  \delta h$.   Furthermore there exists a constant $C>0$ such that $\norm{\delta X}\leq C\norm{\delta U}$; hence taking $\norm{\delta(D\bar{D}^{-1})}$ small enough we get $\sigma_{\min}(\delta (D\bar{D}^{-1}))-C\sigma^2_{\max}(\delta (D\bar{D}^{-1})) > 0$ and the bound follows from \eqref{bounddelta}.
\end{proof}

Now we can apply Proposition \ref{lower_bound} to (\ref{KF}) and if we neglect terms $|h_i^2|\ll 1$ and $|\delta h_i|\ll |h_i|$ we finally get the desired bound for the condition number
\begin{equation}\label{kappa_h}
\kappa (F)  \leq \frac{h_{\max}}{h_{\min}}\frac{1}{\delta h} .
\end{equation}
 Then, if for a given $N$ we obtain a boundary matrix $F$ which is bad conditioned, it suffices to change the size of the discretization, i.e., to increase $N$,  to improve the condition number.  Of course, if $N$ is already quite large, then the bound \eqref{kappa_h} could be useless.   For typical values $h\approx 10^{-2}\sim 10^{-3}$, it can be taken as $\delta h \approx 10^{-4} \sim 10^{-5}$ to provide condition numbers $\kappa(F) \approx 10^4\sim 10^5$.
 
\subsection{The spectral pencil}  For any $N > 2n$ we define the finite-dimensional approximation space $S^N$ as the linear span of the bulk and boundary functions, i.e., $S^N = \mathrm{span} \{ f_k^{(\alpha)}, \beta^{(l)} \mid \alpha = 1, \ldots, n, k = 2, \ldots, r_\alpha-1, l = 1, \ldots 2n  \}$.   All functions $f^{(\alpha)}_k$ and $\beta^{(l)}$ are linearly independent; thus the dimension of $S^N$ will be $|r| = r_1 + \ldots + r_n$.  It is convenient to rearrange the elements of the basis above as follows:
$$ \beta^{(1)}, f^{(1)}_1, \ldots, f^{(1)}_{r_1}, \beta^{(2)}, \beta^{(3)}, f^{(2)}_1, \ldots, f^{(2)}_{r_2}, \beta^{(4)}, \ldots, \beta^{(2n-1)}, f^{(n)}_1, \ldots, f^{(n)}_{r_n}, \beta^{(2n)} .$$
Using this ordering, we will rename the elements of this basis as $f_a$, with $a = 1, \ldots, |r|$, and an arbitrary element $\Phi_N \in S^N$ will be writen as $ \Phi_N (x) = \sum_{a=1}^{|r|}\Phi _a f_a(x) .$
We consider now the approximate eigenvalue problem (see \eqref{weak}):
\begin{equation}\label{approximate}
	\langle \Psi '_N,\Phi '_N\rangle - [\overline{\Psi }_N \Phi '_N]_{\partial M} + \langle \Psi _N, \mathcal{V}\Phi _N\rangle - \lambda_N\langle \Psi _N,\Phi _N \rangle = 0\quad \forall \Psi _N \in S^N.
\end{equation}
Introducing the expansion above in \eqref{approximate} we get
\begin{equation}\label{spectral_bilinear}
\sum_{a,b}^{|r|}\Psi _a \Bigl[\langle f'_a(x), f'_b(x)\rangle - [ \bar{f}_a(x) f'_b(x)]_{\partial M} +
\langle f_a(x), \mathcal{V}f_b(x)\rangle - \lambda \langle f_a(x), f_b(x) \rangle \Bigr] \Phi _b = 0 . 
\end{equation}
As \eqref{spectral_bilinear} holds for every $\Psi _N\in S^N$, this equation is equivalent to the eigenvalue equation of the matrix pencil $A-\lambda B$
\begin{equation}\label{pencil}
A=\lambda B,
\end{equation}
where $A_{ab} = \langle f'_a(x) , f'_b(x) \rangle - [\bar{f}_a(x) f'_b(x)]_{\partial M} + \langle f_a(x), \mathcal{V}f_b(x) \rangle$ and $B_{ab} = \langle f_a(x), f_b(x) \rangle$.   Notice that $A$ and $B$ are both Hermitian matrices, which improves the numerical algorithms used to compute the eigenvalues of the pencil.
In fact, when solving numerically \eqref{pencil}, it is relevant to preserve its Hermitian character. Notice that
the boundary functions $\beta^{(l)}(x)$ satisfy 
\begin{equation}\label{constraint}
[\bar{\beta}^{(l)} \beta^{(m)}{'} ]_{\partial M} - [\beta^{(m)} \bar{\beta}^{(l)}{'}]_{\partial M} = 0 ,
\end{equation}
because their boundary values are elements of a maximally isotropic subspace of the Lagrange boundary form.  Using \eqref{derivatives} and the definition of the boundary values of the boundary functions codified in the matrix $V$ we have
$$[\bar{\beta}^{(l)} \beta^{(m)}{'} ]_{\partial M} = \sum_{k=1}^{2n}\frac{1}{h_{k}}\bar{V}_{kl}(V_{km}-\delta_{km})
= \sum_{k=1}^{2n}\frac{1}{h_k}\bar{V}_{kl}V_{km}-\frac{1}{h_m}\bar{V}_{ml}.$$ 
This identity together with \eqref{constraint} leads to 
\begin{equation}\label{constraint2}
 \frac{1}{h_j}\bar{V}_{jk}=\frac{1}{h_k}V_{kj}.
\end{equation}
The Hermitian relation \eqref{constraint2} is satisfied by the numerical solutions of \eqref{boundary_equation} up to roundoff errors and consequently the pencil \eqref{pencil} is Hermitian only up to numerical roundoff errors.   We will force the numerical solution of matrix $V$ to satisfy \eqref{constraint2} so that the Hermiticity of the pencil is preserved exactly.  This is convenient  because the algorithms for solving the general eigenvalue problem are much better behaved in the Hermitian case \cite{demmel}. 

To end this discussion we must realize that, with the basis $f_a$ for $S^N$ we have just constructed, the matrices $A$ and $B$ are almost tridiagonal and the unique elements different from zero, besides the tridiagonal ones, are those related to the matrix elements of the boundary functions.  In fact, we can consider a number of cases.  If the function $f_a$ is an interior bulk function, i.e., not corresponding to the node $x^{\alpha}_2$ or $x^{(\alpha)}_{r_\alpha -1}$, it is obvious that the only nontrivial inner products $\langle f_a, f_b \rangle$ and $\langle f'_a , f'_b \rangle$ will correspond to $b = a-1,a, a+1$.   If the function $f_a$ is an extreme bulk function, for instance, $f_2^{(\alpha)}$, then it has nontrivial inner products only with $\beta^{(2\alpha-1)}$ and $f_3^{(\alpha)}$, and if $f_a$ is now a boundary function $\beta^{(l)}$, then the only nonvanishing inner products will be with the other boundary functions and an extreme bulk function, namely, $f_2^{(\alpha)}$ if $l=2\alpha-1$ or $f_{r_\alpha-1}^{(\alpha)}$ if $l=2\alpha$.  All these considerations would help to improve the accuracy and performance of the numerical algorithms we use to compute the eigenvalues of the spectral pencil.
  
\subsection{Convergence of the numerical scheme}\label{convergenceofthenumericalscheme}

Now we discuss the convergence of the proposed numerical scheme. Consider first the following space of functions
	$$\D_U = \Bigl\{\Phi \in\mathcal{C}^\infty(M) \mid \varphi - i \dot{\varphi} = U(\varphi + i \dot{\varphi}),\, \varphi=\Phi \bigr|_{\partial M},\,\dot{\varphi}=\frac{d\Phi }{d\nu}\bigr|_{\pM} \Bigr\}.$$
Then we have the following completeness result.\\

\begin{theorem}\label{RG}
The union of all the finite-dimensional spaces $S^N$ is the closure of  $\D_U$ in the Sobolev norm of order 1, 
$$\overline{ \displaystyle{\cup_{N>1} S^N} }^{\norm{\cdot}_{\H^1}}=\overline{\D_U}^{\norm{\cdot}_{\H^1}}.$$
\end{theorem}

\begin{proof} Let $\{M^N\}$, $0\leq\frac{1}{N}\leq 1$, be the nondegenerate family of subdivisions of the manifold $M\subset\mathbb{R}$ defined in section \ref{finiteelementsforgeneralsabc}. It consists of a collection of $|r|+n$ closed subintervals $I_{\alpha,a}$, $a=1,	\dots,r_\alpha+1$, of the real line. Each subinterval is a representation of the reference element $K=[0,1]$ with nodal set $\mathcal{N}=\{0,1\}$. According to subsection \ref{finiteelementsforgeneralsabc}, the family of functions $\mathcal{P}$ consists of the space of linear polynomials in $K$. Let $\mathcal{P}_m$ be the space of polynomials of degree $m$. Then the reference element $(K,\mathcal{P},\mathcal{N})$ satisfies the following for $m=1,2$\, and $l=0$:
\begin{itemize}
	\item $K$ is star-shaped with respect to some ball.
	\item $\mathcal{P}_{m-1}\subseteq \mathcal{P}\subseteq W^{m,\infty}(K)$.
	\item The nodal variables $\mathcal{N}$ involve derivatives up to order $l$.
\end{itemize}
For all $I_{\alpha,a}\in M^N$ let $(K,\mathcal{P}_{\alpha,a},\mathcal{N}_{\alpha,a})$ be the affine-equivalent element.   Let $P_N$ be the orthogonal projection operator onto $S^N$ and suppose that $1<p<\infty$ and $m-l-1/p>0$. Then, according to  \cite[Theorem 4.4.20]{Bre08}, there exists a constant $C$, depending on the reference element, $m$ and $p$ such that for $0\leq s\leq m$,
\begin{equation}\label{inequality_gen}
	\left(\sum_{I_{\alpha,a}\in M^N}\norm{\Psi-P_N\Psi}^p_{\H^{s}(I_{\alpha,a})}\right)^{1/p}\leq C N^{s-m}\norm{\Psi}_{\H^{m}(M)}.
\end{equation}
Particularizing for the case $s=1$, $m=2$, $p=2$, inequality (\ref{inequality_gen}) becomes
\begin{equation}\label{bound}
	\norm{\Psi -P_N\Psi}_{\mathcal{H}^1(M)}\leq \frac{C}{N}\norm{\Psi}_{\mathcal{H}^2(M)} .
\end{equation}
The inequality above and the boundary conditions in $\D_U$ and $S^N$ ensure that for every $\Psi\in \D_U$ there exists a sequence $\{\Psi_N\}_{N=0}^\infty$ with $\Psi_N\in S_N$ such that $\lim_{N\to\infty}\norm{\Psi-\Psi_N}_{\H^1(M)}=0$ and therefore $$\overline{\D_U}^{\norm{\cdot}_{\H^1}}\subset \overline{\cup_{N>1} S^N}^{\norm{\cdot}_{\H^1}}.$$

To prove the converse inclusion it is enough to realize that continuous functions are always contained in $\H^1(M)$. In fact, Sobolev's embedding theorem (cf. \cite{Ad75}) establishes, for $M$ of dimension 1, that one can always select a continuous representative for each equivalence class in $\H^1(M)$. Continuous functions verifying the boundary condition in \eqref{asorey} are therefore in $\smash{\overline{\D_U}}^{\norm{\cdot}_{\H^1}}$, and so are the finite-dimensional approximation spaces $S^N$ for any $N$.
\end{proof}\\

Let us now introduce the following notation for the bilinear form on the left-hand side of the first equation \eqref{weak}:
\begin{equation}\label{quadraticform}
	Q(\Psi,\Phi):=\langle \Psi' ,\Phi'\rangle-[\overline{\Psi}\Phi']\bigr|_{\partial M}+\langle \Psi ,\mathcal{V}\Phi\rangle.
\end{equation}
Notice that the bilinear form $Q$ is Hermitian, $Q(\Psi,\Phi)=\overline{Q(\Phi,\Psi)}$, iff the boundary conditions \eqref{asorey} are satisfied.\\

\begin{lemma}\label{Qsemibounded}  The bilinear form $Q$ \eqref{quadraticform} with domain $\smash{\overline{\D_U}}^{\norm{\cdot}_{\H^1}}$ is bounded below with respect to the Sobolev norm of order 1, i.e., it exists $C\geq0$ such that $$Q(\Phi,\Phi)\geq -C \norm{\Phi}^2_{\H^1}. $$
\end{lemma}

\begin{proof}
The potential function $\mathcal{V}$ is bounded below by hypothesis and obviously $\scalar{\Phi'}{\Phi'}\geq0$, so it suffices to prove that  $|[\overline{\Psi}\Phi']\bigr|_{\pM}|\leq C \norm{\Phi}^2_{\H^1}$.  Notice that we can diagonalize the unitary matrix $U$ as
$$ U = S^\dagger D S ,$$
where $D = \mathrm{diag}  (-1,\ldots,-1,z_1,\ldots, z_r)$ with $z_i$ complex numbers of modulus 1 different from $-1$.
Now notice that the boundary condition \eqref{asorey} can be rewritten taking into account the diagonalization of $U$ as $(I-D)S\varphi=i(I+D)S\dot{\varphi}$ or equivalently 
$$	\begin{bmatrix}
		2 & \\ & Z_-
	\end{bmatrix}
	\begin{bmatrix}
		\varphi_0 \\ \varphi_1 
	\end{bmatrix}=
	i\begin{bmatrix}
		0 & \\ & Z_+
	\end{bmatrix}
	\begin{bmatrix}
		\dot{\varphi}_0 \\ \dot{\varphi}_1 
	\end{bmatrix},$$ 
 where we have defined $Z_\pm:=\mathrm{diag}(1\pm z_1,\dots,1\pm z_r)$ and $$\begin{bmatrix}\varphi_0 \\ \varphi_1\end{bmatrix}:=S\varphi,\quad \begin{bmatrix}\dot{\varphi}_0 \\ \dot{\varphi}_1\end{bmatrix}:=S\dot{\varphi}$$ are just the boundary data adapted to the new basis of $\mathbb{C}^{2n}$. Note that this system of equations now implies that $\mathbb{C}^{2n-r}\ni \varphi_0=0$ and $\mathbb{C}^{r}\ni \dot{\varphi}_1=-iZ_+^{-1}Z_-\varphi_1$. Moreover, as $S$ is a unitary operator in the Hilbert space of the boundary, the inner product $\langle \varphi, \dot{\varphi} \rangle$ verifies 
$$  \langle \varphi, \dot{\varphi} \rangle = -i \langle \varphi_1 , Z_+^{-1}Z_- \varphi_1 \rangle  ,$$
and then
\begin{equation}\label{acotacioncuadratica}
|  \langle \varphi, \dot{\varphi} \rangle | \leq \norm{Z_- Z_+^{-1}} \norm{ \varphi}^2 \leq c \norm{\Phi}_{\H^1}^2,
\end{equation}
where the last inequality is a particular case of the boundary trace inequalities \cite{Ad75} taking $\varphi\in L^2(\pM)$. 
It can happen that $\{-1\}\not\in \sigma(U)$; then $(I+U)$ is invertible and \eqref{acotacioncuadratica} still holds with $Z_{\pm}=I\pm U$.
Taking into account that the potential $\mathcal{V}$ is assumed to be bounded below and that $\norm{\cdot}\leq\norm{\cdot}_{\H^1}$, it follows that the quadratic form is bounded below with respect to the Sobolev norm of order 1.
\end{proof}\\ 

\begin{corollary}\label{positive definite}
Let $U\in U(2n)$ be a unitary matrix whose spectrum $\sigma(U)\subseteq \{-1,1\}$. Then, the quadratic form Q is bounded below.
\end{corollary}\\

\begin{proof} If the spectrum of the unitary $\sigma(U)\subseteq\{-1,1\}$, then the matrix $Z_-$ of the  theorem above is the zero matrix. Therefore \eqref{acotacioncuadratica} becomes $\scalar{\varphi}{\dot{\varphi}}=0$ and $$Q(\Phi,\Phi)\geq k\norm{\Phi}^2\;,$$ where $k$ is the semibound of the potential $\mathcal{V}$.
\end{proof}\\

We will end the proof of the convergence of the algorithm by using the variational characterization of eigenvalues.  However, in order to do that  the quadratic form $Q$ has to be bounded below not only in $\mathcal{H}^1(M)$ but in $L^2(M)$ for any unitary $U\in\mathcal{U}(2n)$.    We will discuss this in the next paragraphs.

Let us recall first that the class of Schr\"odinger operators considered here are self-adjoint extensions of closed symmetric operators with domain $\mathcal{H}^2_0(M)$ (cf. Section \ref{Selfadjoint_extensions_of_Schrodinger_operators}). Notice that even if the symmetric operators defined in $\mathcal{H}^2_0$ are bounded below by $-\norm{\mathcal{V}}_{\infty}$, the self-adjoint extensions defined in Eq.\ \eqref{schrodinger} need a more careful analysis. This is due to the boundary term $-[\bar{\Psi\Phi^\prime}]\bigr|_{\partial M}=\scalar{\psi}{\dot{\varphi}}_{\mathcal{L}^2(\partial M)}$, which is not positive defined, that appears in the weak form \eqref{weak} of the problem.

Because the manifold $M$ is compact and of dimension 1,  the deficiency indices of the symmetric operators are finite and equal to $2n$, the number of elements in the boundary of the manifold $\partial M$. Notice that the ordinary differential equation of order 2 defining the deficiency spaces (\ref{deficiency}) has two independent solutions on each subinterval.  Therefore, according to \cite[Theorem 8.18]{We80}, all self-adjoint extensions of this symmetric operators have the same essential spectrum and because the Dirichlet extension has a pure discrete spectrum, then all the self-adjoint extensions do so.

Accordingly, let us select a self-adjoint extension of the Schr\"odinger operator $H$ defined by the unitary matrix $U$.   We denote by $D_U$ its domain and by $H_U$ the self-adjoint operator corresponding to such unitary matrix.  Notice that because of Theorem \ref{self1} and (\ref{graph_uni}),  $\gamma(D_U) =  \mathrm{Gr} (A_U ) = C^{-1}(\mathrm{Gr}(U))$, where $\mathrm{Gr} (U)$ denotes the graph of the linear map $U$ and $A_U = i(U-I)/(U+I)$ denotes the Cayley transform of $U$ (we assume that $-1 \notin \sigma (U)$).  Then $Q(\Phi, \Psi) = \langle \Phi , H_U \Psi \rangle$ for any $\Phi,\Psi \in D_U$  (notice that $D_U \subset \smash{\overline{\D_U}}^{\norm{\cdot}_{\H^1}}$).     Finally, the trace map $\gamma$ defines a continuous isomorphism between the subspace $(I+K)\mathcal{N}_+ = \mathrm{Gr} (K)$ appearing in the definition of the domain of the self-adjoint extension $H_U$, Eq.\ (\ref{Neumanndomain}), and the subspace $\mathrm{Gr}(A_U)$.  Thus the unitary matrix $K$ is implicitly defined as a function of $U$ alone.   

This correspondence can be made more explicit as follows.   Let $U$ be a unitary matrix defining a self-adjoint extension of $H$ such that $-1\notin \sigma(U)$.  Now consider a boundary value function $\varphi$ (that in our case is just a vector in $\mathbb{C}^{2n}$),  then there is exactly one solution to each of the boundary data problems $H_{\alpha} \Phi_\alpha = \pm i \Phi_\alpha$, $\Phi_\alpha\mid_{I_\alpha} = \varphi_\alpha$, where $H_\alpha = H\mid_{I_\alpha}$ on each subinterval $I_\alpha$, Eq.\ (\ref{sturm}), and $\varphi_\alpha$ denotes the restriction of $\varphi$ to the interval $I_\alpha$.   Let us denote such solutions by $\xi^{(\alpha)}_\pm$ respectively and by $\xi_\pm$ the vector whose components are $\xi_\pm^{(\alpha)}$.  Then we choose a combination $a\circ \xi_+ b\circ \xi_-$, with $a,b$ $n$-component vectors and $\circ$ the pointwise Hadamard product, such that $\gamma (a \circ \xi_+ + b\circ \xi_-) = (\varphi, A_U(\varphi))$.  Notice that there is just one such combination.   Then we define $K(a\circ \xi_+) = b\circ \xi_-$.   Then, the matrix $K$ will depend on the Wronskian of the fundamental system of solutions provided by $\xi_\pm^{(\alpha)}$ and the inhomogeneous part of the previous equation, thus it is clear that the matrix $K$ depends continuously on $A_U$.   

Let us consider now a continuous map $\gamma \colon [0,1] \to U(2n)$ into the space of unitary maps such that $-1 \notin \sigma(U(t))$ for all $t$ and we denote $U(t) = \gamma (t)$.  Then, it is clear that,  if we denote by $K(t)$ the unitary map corresponding to $A_{U(t)}$, it depends continuously on $t$ (again, the graph of $K(t)$ is composed of solutions of the deficiency equations $H_{U(t)} \xi_{\pm} = \pm i \xi_{\pm}$ satisfying the normal boundary conditions $\dot{\varphi} = A_{U(t)} \varphi$, and they depend continuously on $t$).    The eigenvalues and eigenfunctions of an elliptic differential operator depend continuously on the parameters of the problem, but because Eq.\ (\ref{neumann_op}) $H_{U(t)} = H_0 \oplus i(I-K(t))$, thus we may conclude that if $U(t)$ is a continuous path of unitary operators not containing $-1$ in the spectrum, the eigenvalues $\lambda (t)$ of the corresponding self-adjoint extensions will depend continuously on $t$ .   \\

\begin{theorem} \label{L2semibounded} Let $U\in U(2n)$ be a unitary matrix. Then, the quadratic form $Q$ defined in the domain $\smash{\overline{\D_U}}^{\norm{\cdot}_{\H^1}}$ corresponding to the self-adjoint extension $H_U$ of the symmetric Schr\"odinger operator $H$ is bounded below in $L^2(M)$.
\end{theorem}\\

\begin{proof}   Let $S$ be a unitary matrix diagonalizing $U$, i.e., $S^\dagger U S = \mathrm{diag} (e^{i\alpha_1}, \ldots, e^{i\alpha_{2n}})$.  Let the path $\gamma \colon [0,1] \to U(2n)$ be defined as $\gamma(t) = S D(t) S^\dagger$ with $D(t)$ the diagonal matrix whose entries are given by $D_{kk} (t) = e^{it\alpha_k}$ if $\alpha_k < \pi$, $D_{kk} (t) = e^{i(t\alpha_k +(1 - t)2\pi)}$ if $\alpha_k > \pi$ and $D_{kk} (t) = -1$ if $\alpha_k = \pi$ .  Then it is clear that $U(0)$ is diagonal with $\pm 1$ on it depending on the element.   On the other hand $U(1) = U$.    The eigenvalues of the matrix $U(t)$ will never cross $-1$ (if they are $-1$, they remain constant) and because of the continuous dependence of eigenvalues on $t\in [0,1]$, their variation is bounded.  
Finally notice that the form $Q$ is bounded below on the domain $\overline{\D_U(t)}^{\norm{\cdot}_{\H^1}}$ iff the corresponding Hamiltonian $H_{U(t)}$ is bounded below. Now because of corollary \ref{positive definite}, the quadratic form associated to $U(0)$ and therefore the Hamiltonian $H_{U(0)}$ is bounded below.   Notice that the ellipticity of the Schr\"odinger operator $H_U$ guarantees that only a finite number of eigenvalues can become negative and therefore any self-adjoint extension in the path $\gamma$ will be bounded below.
Notice that if the path $\gamma(t)$ is smooth, then a given eigenvalue $\lambda(t)$ satisfies the differential equation $\dot{\lambda}(t) = \langle \Phi(t) , \dot{K(t)}  \Phi(t) \rangle$, with $\Phi(t)$ the eigenvector of $\lambda(t)$, but because of the differentiability of $K(t)$ and the compactness of the interval $[0,1]$ the total variation of $\lambda (t)$ is bounded by $C || K(t) ||_\infty$, i.e., independently of $\lambda$.
\end{proof}\\

Now the convergence of the eigenvalues and eigenvectors follows by standard arguments.\\

\begin{theorem}\label{ConvergenciaSoluciones}
The solutions $(\Phi_N,\lambda_N)$ of the approximate eigenvalue problem \eqref{approximate} converge in the limit $N\to \infty$ to solutions $(\Phi,\lambda)$ of the weak problem \eqref{weak}.
\end{theorem}\\

\begin{proof}
Theorem \ref{L2semibounded} assures that the \textrm{min-max} principle \cite{Re75} applies for the quadratic form \eqref{quadraticform} and the convergence of the eigenvalues is proved as follows.

Applying the \textrm{min-max} principle to the quadratic forms \eqref{quadraticform} of the weak and the approximate problem and substracting them we get
\begin{equation}
\sup_{\varphi_1,...,\varphi_{n-1}}\biggl[\inf_{\stackrel{\Phi _N\in[\varphi_1,...,\varphi_{n-1}]^\bot}{\Phi _N\in S^N}} \frac{Q(\Phi _N,\Phi _N)}{\norm{\Phi _N}^2}\biggr]-\sup_{\varphi_1,...,\varphi_{n-1}}\biggl[\inf_{\stackrel{\Phi \in[\varphi_1,...,\varphi_{n-1}]^\bot}{\Phi \in \D_U}} \frac{Q(\Phi ,\Phi )}{\norm{\Phi}^2}\biggr]=\lambda_N-\lambda.
\end{equation}
The left-hand side tends to zero in the limit $N\to\infty$ by Theorem \ref{RG} and therefore
\begin{equation}\label{Convergenceeigenvalue}
	\lim_{N\to\infty}\lambda_N=\lambda.
\end{equation}
Now let $\Phi_N\in \H^1(M)\subset L^2(M)$ be a solution of \eqref{weak}, i.e. $$Q(\Psi_N,\Phi_N)-\lambda_N\scalar{\Psi_N}{\Phi_N}=0,\quad \forall \Psi_N\in S^N.$$  We can assume that $\norm{\Phi_N}_{\H^1}=1$ for every $N$ and therefore the Banach-Alaoglu theorem ensures that subsequences $\{\Phi_{N_j}\}$ and accumulation points $\Phi$ exist such that for every $\Psi\in\smash{\overline{\D_U}}^{\norm{\cdot}_{\H^1}}$ 
\begin{equation}\label{BanachAlaoglu}
\lim_{N_j\to\infty}|\scalar{\Psi}{\Phi-\Phi_{N_j}}_{\H^1}|=0.
\end{equation}
If we denote by $P_N\Psi$ the orthogonal projection of $\Psi$ into $S^N$, we have that for every $\Psi\in\smash{\overline{\D_U}}^{\norm{\cdot}_{\H^1}}$
\begin{align*}
	|Q(\Psi,\Phi)-\lambda\scalar{\Psi}{\Phi}|&\leq|Q(\Psi,\Phi-\Phi_{N_j})|+|\lambda||\scalar{\Psi}{\Phi-\Phi_{N_j}}|+|\lambda_{N_j}-\lambda||\scalar{\Psi}{\Phi_{N_j}}|+\cdots\\
	&\phantom{\leq}\cdots+ |Q(\Psi-P_N\Psi,\Phi_{N_j})|+|\lambda_{N_j}||\scalar{\Psi-P_N\Psi}{\Phi_{N_j}}|.
\end{align*}
Now \eqref{bound}, \eqref{Convergenceeigenvalue}, \eqref{BanachAlaoglu}, and Lemma \ref{Qsemibounded} ensure that the right-hand side vanishes in the limit $N_j\to \infty$ and we get that the accumulation points of the sequence $\{\Phi_N\}$ are eigenvectors of the weak eigenvalue problem.
\end{proof}


\section{Numerical experiments}\label{Numerical_experiments_and_conclusions}


The numerical scheme described in section \ref{FEM} results in the finite dimensional eigenvalue problem of \eqref{pencil}.  Furthermore, we have the $1/N$ bound \eqref{bound} on the discretization error.   Hence, if we were able to solve \eqref{pencil} for increasing grid size $N$, we would get better and better approximations to the eigenvalue problem. As remarked in the previous section, the pencil is almost tridiagonal and both matrices $A$ and $B$ are Hermitian.  Hence, the resulting problem is algebraically well behaved and it should lead to accurate results.  

We will now discuss some numerical experiments that illustrate the stability and the convergence of the algorithm.  We will also compare these results with those obtained by using two other algorithms (not based on the finite element method). In the latter cases we will use a particular choice of boundary conditions close to the singular case described in \cite{Be08}.  
We will consider the spinless free particle in $M=[0,2\pi]$ ($\mathcal{V}(x) = 0$) subject to different boundary conditions:
\begin{equation}\label{spinlessfree}
-\frac{d^2}{dx^2}\Psi=\lambda \Psi\;.
\end{equation}
Notice that in this simple case there is only one interval and $r=N+1$, so we could use $r$ or $N$ interchangeably.  The number $h=\frac{2\pi}{r+1}$ is going to be the length of each subinterval.
After some straightforward computations we find that the matrices $A$ and $B$ defining the spectral pencil $A -\lambda B$ associated to (\ref{spinlessfree}) are given by

\begin{equation}
		\begin{array}{ll}
		
			\begin{array}{l} 
				a_{kk}=2,\quad k=2,\dots, r-1\,,\\
				a_{k \, k+1}=a_{k+1\, k}=-1,\quad k=1,\dots,r-1\,,\\
				a_{11}=2-V_{11}\,,\\
				a_{rr}=2-V_{22}\,,\\
				a_{1r}=-V_{12}\,,\\
				a_{r1}=\overline{a_{1r}}=-V_{21}\,,\\
				a_{kj}=0,\quad j\geq k+2,\quad j\leq k-2\,,\\
				A_{kj}=\frac{1}{h}a_{kj}\,,
			\end{array}  &
			\begin{array}{l}
				b_{kk}=4,\quad k=2,\dots, r-1\,,\\
				b_{k\, k+1}=b_{k+1\, k}=1,\quad k=1,\dots,r-1\,,\\
				b_{11}=4+2[|V_{11}|^2+|V_{21}|^2]+2V_{11}\,,\\
				b_{rr}=4+2[|V_{22}|^2+|V_{12}|^2]+2V_{22}\,,\\
				b_{1r}=2[\bar{V}_{11}V_{12}+\bar{V}_{21}V_{22}]+2\,,\\
				b_{r1}=\overline{b_{1r}}\,,\\
				b_{kj}=0,\quad j\geq k+2,\quad j\leq k-2\,,\\
				B_{kj}=\frac{h}{6}b_{kj}\;.
			\end{array}
			
		\end{array}		
\end{equation}

In each case, in order to obtain the matrix $V$, one needs to solve previously the corresponding system of equations \eqref{boundary_equation} for the given self-adjoint extension. The solutions of these generalized eigenvalue problems have been obtained using the \textrm{Octave} built-in function $\mathbf{eig}$. The details of this routine can be found in \cite{An92}. This built-in function is a general-purpose diagonalization routine that does not exploit the particularly simple algebraic structure of this problem. One could adapt the diagonalization routine to the algebraic structure of the problem  at hand (there are only two elements outside the main diagonals) to improve the efficiency. Moreover, one could also use the $p$-version of the finite element method that, considering that the solutions are smooth, would be more adaptive.   However our main objective here is to show that the computation of general self-adjoint extensions by using nonlocalized finite elements at the boundary, as explained in subsection \ref{sectionboundaryfunctions}, is reliable and accurate.  We consider that the results explained below account for this, and we leave these improvements for future work.

First we will test the stability of the method against variations of the input parameters.  The parameters of this procedure are the matrix $U$ determining the self-adjoint extension whose eigenvalue problem we want to solve.  We will perturb an initial self-adjoint extension, described by a unitary matrix $U$, and we will observe the behavior of the eigenvalues. In other words, we are interested now in studying the relation 
				\begin{equation}
					|\Delta \lambda|=K(\varepsilon)\norm{\Delta U}\;,
				\end{equation}
where $\epsilon$ is the parameter measuring the size of the perturbation. If the algorithm were stable one would expect that the condition number $K(\varepsilon)$ would grow at most polynomially with the perturbation $\epsilon$.  However, we must be careful in doing so since the exact eigenvalue problem presents divergences under certain circumstances (explained below) which could lead to wrong conclusions.  In fact, as a consequence of Lemma \ref{Qsemibounded}, we see that when a self-adjoint extension is parameterized by a unitary matrix $U$ that has eigenvalues close but not equal to $-1$, it happens that some eigenvalues of the considered problem take very large negative values. However, matrices with $-1$ in the spectrum can lead to self-adjoint extensions that are positive definite, for example, Dirichlet or Neumann self-adjoint extensions.  Thus, following a path in the space of self-adjoint extensions, it could happen that a very small change in the arc parameter leads to an extremely large jump in the exact eigenvalues.  Such self-adjoint extensions are precisely the ones that lead in higher dimensions to the problem identified by M.~Berry as a Dirichlet singularity\footnote{Notice again that $U=-\mathrm{1}$ is the unitary matrix describing Dirichlet boundary conditions.} \cite{Be08},\cite{Be09}, \cite{Ma09} and they will be the target of our latter tests. For proving the stability it is therefore necessary to perturb the unitary matrix along a direction of its tangent space such that this situation is avoided.  A path in the space of self-adjoint extensions where these jumps do not occur is, for instance, the one described by the so-called quasi-periodic boundary conditions \cite{As83}.  In this case, the self-adjoint domain is described by functions that satisfy the boundary conditions $\Psi(0)=e^{i2\pi\epsilon}\Psi(2\pi)$ and $\Psi'(0)=e^{i2\pi\epsilon}\Psi'(2\pi)$, which correspond to the unitary matrix 
\begin{equation}\label{unitariaquasiperiodicas}
	U(\epsilon)=\begin{pmatrix} 0& e^{i2\pi\epsilon}\\e^{-i2\pi\epsilon}&0 \end{pmatrix}.
\end{equation}
Notice that this particular choice of boundary conditions, which are nonlocal in the sense that they mix the boundary data at both endpoints of the interval, can naturally be treated by the discretization procedure introduced in section \ref{FEM} and that goes beyond the ones usually addressed by most approximate methods \cite{Ac09}, \cite{Ch99}, \cite{Fa57}, \cite{Pr73}, described by local equations of the form $\alpha\cdot \Psi(a)+\beta\cdot \Psi'(a)=0$, $\gamma\cdot \Psi(b)+\delta\cdot \Psi'(b)=0$.

Let us then consider perturbations of the periodic case in the quasi-periodic direction, i.e., we consider 
$$U(\varepsilon)\simeq U+i\varepsilon A=\begin{pmatrix} 0 & 1\\1 & 0 \end{pmatrix}+i\varepsilon \begin{pmatrix} 0 & 1\\ -1 & 0\end{pmatrix}.$$  We have calculated the numerical solutions for values of $\varepsilon$ between $10^{-4}$ and $10^{-1}$ in steps of $10^{-4}$ and the discretization size used was $N=250$. In the latter case the perturbation is $\norm{\Delta U}=\norm{i\varepsilon A}=\varepsilon$, hence the absolute error ratio is $$K(\varepsilon)=\frac{|\Delta \lambda|}{\norm{\Delta U}}=\frac{1}{\varepsilon}|\Delta \lambda|.$$ The results are plotted in figure \ref{estabilidad}. 
\begin{figure}[h]\centering
\includegraphics[height=3in]{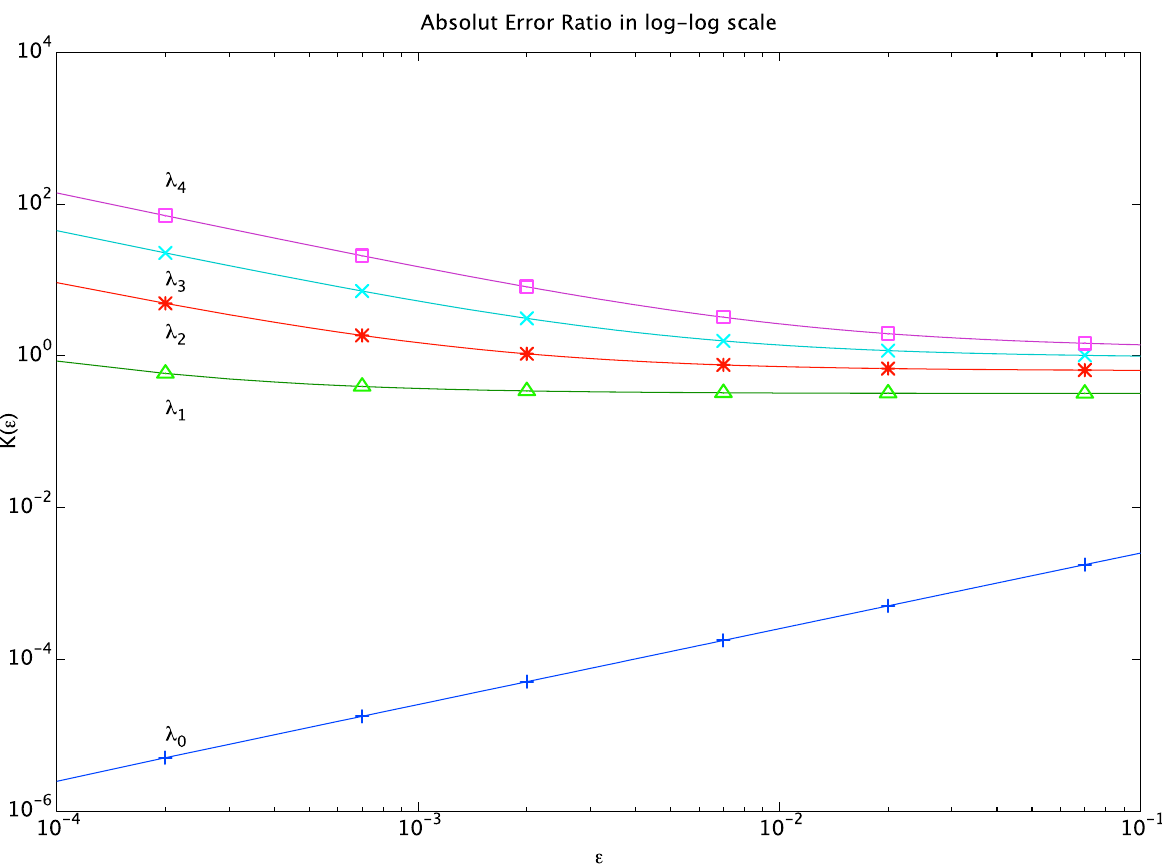}
\caption{Absolute error ratio $K(\varepsilon)$ of the five lowest levels for the periodic boundary problem plotted against $\varepsilon$ in log-log scale.}\label{estabilidad}
\end{figure}
As can be seen, $K(\varepsilon)$ is a decreasing function of $\varepsilon$ except for the fundamental level. In the latter case the absolute ratio is an increasing function of $\varepsilon$. However, in this case it can be seen clearly that the growth is linear. This result shows that the procedure is stable under perturbations of the input matrix $U$.\\

We will consider now the free particle \eqref{spinlessfree} subjected to quasi-periodic boundary conditions:
\begin{equation}
	\Psi(0)=e^{i2\pi\epsilon}\Psi(2\pi)\,, \quad
	\Psi'(0)=e^{i2\pi\epsilon}\Psi'(2\pi) \;.
\end{equation}
These are codified by the unitary matrix \eqref{unitariaquasiperiodicas}. This is a meaningful example since it demonstrates that this algorithm provides a new way to compute the Bloch decomposition of a periodic Schr\"odinger operator. This problem is addressed usually by considering the unitary equivalent problem $-\left( \frac{d}{d x}+i\epsilon \right)^2 \Psi = \lambda\Psi $ 
with periodic boundary conditions. However, our procedure is able to deal with it directly in terms of the original boundary condition.

The analytic solutions for this particular eigenvalue problem can be obtained explicitly \cite{As83}. They are $\lambda_n=(n+\epsilon)^2$, \smash{$\psi_n=\frac{1}{\sqrt{2\pi}}e^{-i(n+\epsilon)x}$}, $n=0,\,\pm1,\,\pm2,\dots$, and we can compare the approximate solution obtained by the procedure described in section \ref{FEM}. In particular we show that the bound \eqref{bound} is satisfied, i.e., that the error between the approximate solutions and the analytic ones measured in the $\mathcal{H}^1$-norm is of the form $1/N$. The results for the five lowest eigenvalues corresponding to $\epsilon=0.25$ are shown in figure \ref{convergencia}.\footnote{In order to avoid errors coming from numerical quadratures, the solutions of the integrals appearing in the $\mathcal{H}^1$-norms of each subinterval, comparing the analytic solutions and the linear approximations, have been computed explicitly.} As is known for the finite element approximations, the error grows with the order of the eigenvalue; however, it is clearly seen that for all the cases the decay law is of the form $1/N$, therefore satisfying the bound.\\

\begin{figure}[h]\centering
\includegraphics[height=4in]{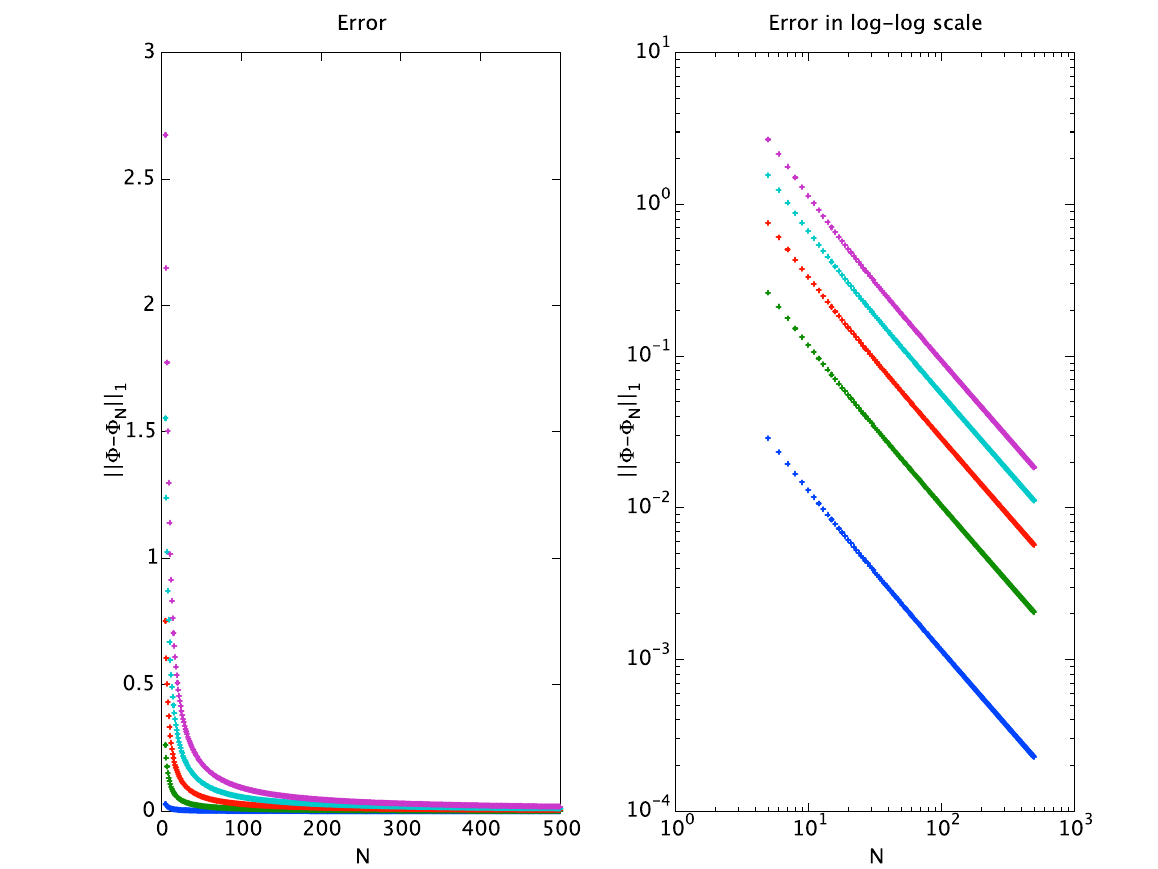}
\caption{Evolution of the error, measured in the $\mathcal{H}^1$-norm, for the five lowest levels of the quasi-periodic free particle problem ($\epsilon=0.25$) with increasing lattice size plotted in normal scale and log-log scale.}\label{convergencia}
\end{figure}

Finally we consider again the free particle, but in this case subjected to the local boundary conditions
\begin{equation}\label{condicionesDsingularity}
		\begin{array}{c}
			\Psi'(0)=0\,,\\
			\tan(-\theta/2)\cdot \Psi(2\pi)+\Psi'(2\pi)=0\;,
		\end{array}
\end{equation}
which are determined by the unitary matrix
\begin{equation}\label{unitariaDsingularity}
	U=\begin{pmatrix} 1 & 0 \\ 0 & e^{-i\theta} \end{pmatrix}\;.
\end{equation}
These boundary conditions, unlike the previous ones, can be handled by most of the software packages available for the integration of Sturm-Liouville problems.   When $\theta=\pi$ the free particle operator is positive, but for values of $\theta<\pi$ the fundamental level is negative with increasing absolute value as $\theta\to\pi$. It happens that for values of $\theta\simeq\pi$ the absolute value of this fundamental eigenvalue is several orders of magnitude bigger than the closest eigenvalue.

The solutions of the discrete problem, according to Theorem \ref{ConvergenciaSoluciones}, are guaranteed to converge to solutions of the exact problem for increasing lattice size. However, it is not necessary that the sequence of eigenvalues obtained in the approximate solution is in correspondence with the sequence of eigenvalues of the exact problem. It may happen that for some threshold $N$ some \emph{new} eigenvalues appear that were not detected for smaller $N$. The big gap between the two lowest levels in the self-adjoint extensions described above is a good example of this feature. We have computed the spectrum for a fixed value $\theta=0.997\pi$ for $N$ in increasing steps. For each value of $N$ we show the lowest five eigenvalues returned by the procedure. The results are plotted in Figure \ref{ConvergenciaAutovaores}. Notice that for $N\leq1300$ the negative fundamental level is not detected. However, for $N=1400$ an approximation of the negative eigenvalue is returned ($\sim-6000$). Now, in this situation and for $N>1400$, the second to fifth lowest eigenvalues coincide with the lowest four returned for the situations below the threshold. The convergence of the negative eigenvalue is also visible. The scale in the negative $y$-axis has been rescaled so that the performance could be better appreciated.
\begin{figure}[h]\centering
\includegraphics[height=3.75in]{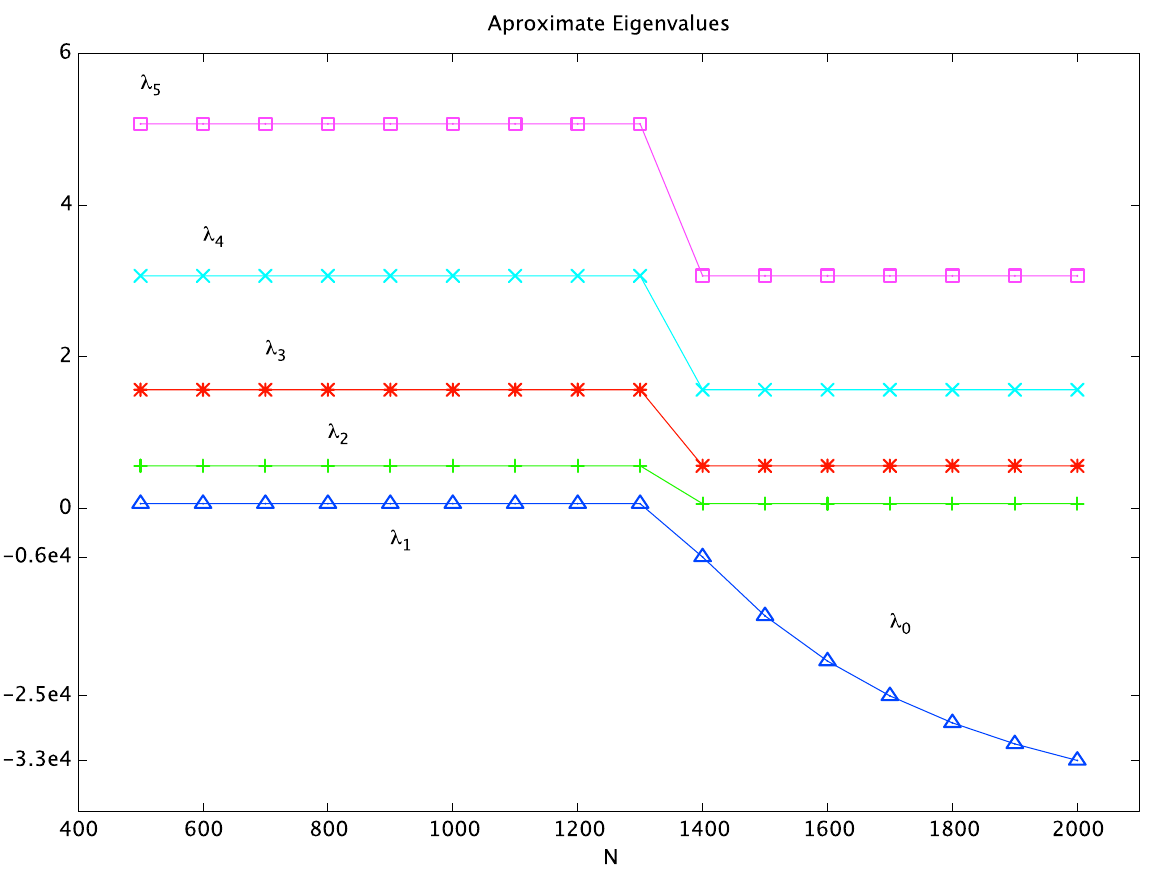}
\caption{Five lowest eigenvalues for $\theta=0.997\pi$ and increasing lattice size.}\label{ConvergenciaAutovaores}
\end{figure}
\\

As stated at the beginning of this section, we are now going to compare the results obtained with the numerical scheme described so far, from now on referred to as FEM, with two other algorithms that are not based on the finite element method, namely the line-based perturbation method (LPM) and the constant reference potential pertubation method (CPM) proposed in \cite{Le06} and \cite{Le05}, respectively. The software used for the calculations is, in the latter cases, a ready-to-run version provided by the authors.\footnote{http://www.twi.ugent.be.} The objective is now to find the solutions for the Schr\"odinger problem with the boundary conditions given by \eqref{condicionesDsingularity}.

The LPM routine reported errors for the lowest order eigenvalues for $\theta\geq 2.6$ and was not able to produce any output to compare. The results comparing the CPM and FEM routines for the free particle are shown in Figures \ref{laplaciano2} to \ref{laplaciano4}. The discretization size used for all the FEM calculations was $N=5000$. Both methods provide almost the same results for the first excited states (indices 1 to 5) of the free particle; however, this is not the case for the fundamental level (Figure \ref{laplaciano2}), where it is expected that the eigenvalues take increasingly large negative values for $\theta$ approaching $\pi$. Note that the lowest value achieved with FEM is $\lambda_0\simeq -7\cdot 10^4$. Clearly the CPM routine fails to reproduce the correct eigenvalues in this situation, although it provides good approximations for $\theta\leq 0.989\pi$. In all these cases the CPM routine issued warnings on the low reliability of the results and an initial estimate for the fundamental eigenvalue was necessary.  Although the numerical eigenvalues are in agreement with the expected ones, except for the fundamental level, the eigenfunctions plotted in figures \ref{laplaciano3} and \ref{laplaciano4} for the case $\theta=3.1$ show clearly that the solutions provided by FEM are more accurate. For instance, the solutions obtained with CPM present discontinuities near the middle of the interval, although they are expected to be smooth functions. The eigenfunction obtained for the fundamental level (Figure \ref{laplaciano4}) is especially remarkable. This function does not present a singularity in the boundary; however, the $x$-axis has been enlarged so that the localization at the boundary could be better appreciated. The FEM solution is a clear example of an \emph{edge state}. These are eigenfunctions that are associated with negative eigenvalues and that are strongly localized at the boundary of the system, while they vanish in the interior of the manifold (the interior of the interval in this case). These \textrm{edge states} are important in the understanding of certain physical phenomena like the quantum Hall effect (see \cite{Xi92} and references therein). The bad behavior of the eigenfunctions obtained with the CPM routine could be due to the fact that, after an initial estimation of the corresponding eigenvalue, it uses a numerical integrator to propagate the solution from one end of the interval to the other. Then it uses the differences between the obtained boundary conditions and the given ones to provide a new starting point and propagate it back. This procedure continues iteratively until convergence while continuity is imposed in the center of the interval. This last statement could explain why the bad behavior always appears in the middle of the interval. It needs to be said that the results using CPM do not depend on the special form that one selects to implement the boundary conditions. If one uses one of the equivalent forms
\begin{equation}
		\begin{array}{c}
			\Psi'(0)=0\\
			\Psi(2\pi)+\cot(-\theta/2)\cdot \Psi'(2\pi)=0
		\end{array} 
		\quad\text{or}\quad
				\begin{array}{c}
			\Psi'(0)=0\\
			\sin(-\theta/2)\cdot \Psi(2\pi)+\cos(-\theta/2)\cdot\Psi'(2\pi)=0
		\end{array} 
\end{equation}
the results remain the same.
The CPM routine in general worked much faster than the FEM routine in all performed calculations.\footnote{All the numerical calculations of this section were performed with a laptop computer with an Intel Core i5 processor at $2.53$ GHz with 4 GB DDR3 RAM.} However, as stated earlier, the algebraic routine used to solve the generalized eigenvalue problem is not adapted to the structure of the problem and other finite element schemes could be used to improve convergence and efficiency.

\begin{figure}[h]\centering
\includegraphics[height=4.7in]{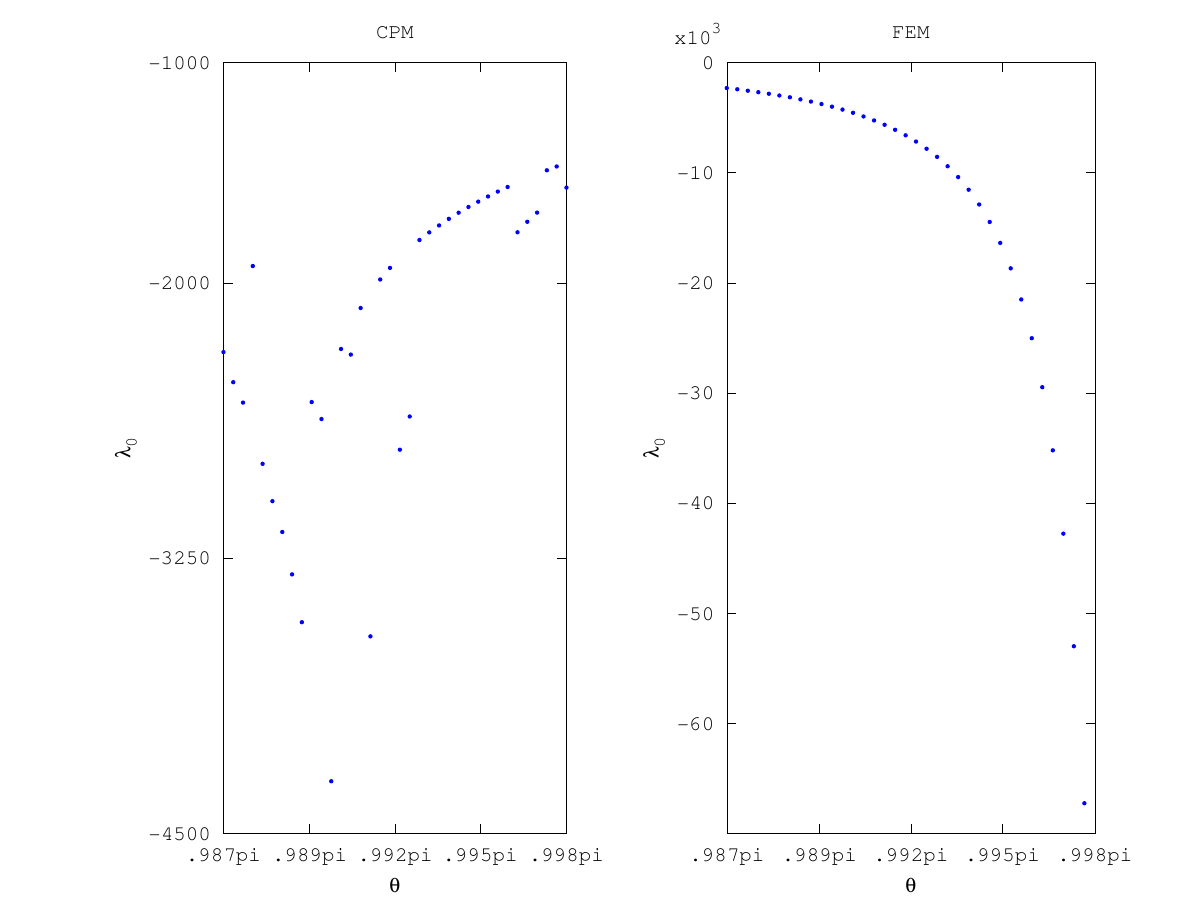}
\caption{Groundlevel energies of the free particle problem for increasing values of $\theta$.}\label{laplaciano2}
\end{figure}

\begin{figure}[h]\centering
\includegraphics[height=3.75in]{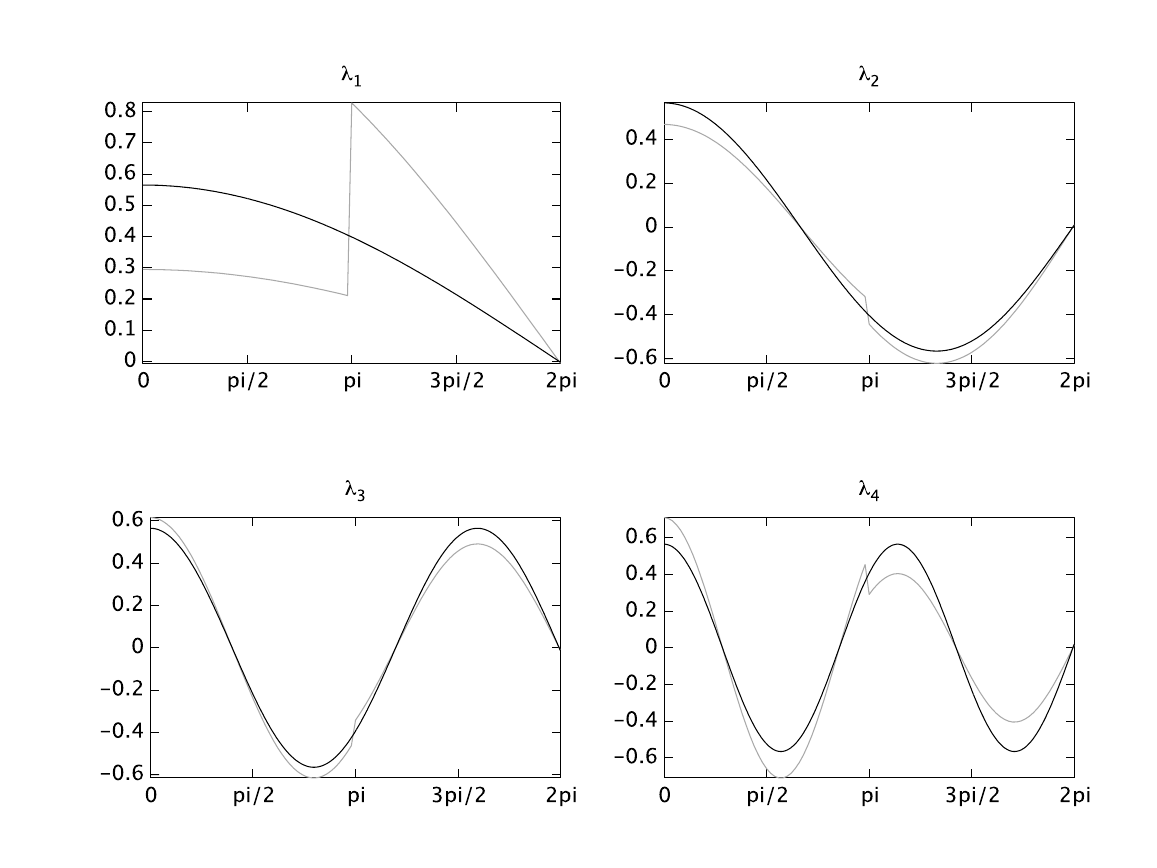}
\caption{Eigenfunctions of the first excited states for the free particle for $\theta=3.1$. FEM functions are plotted in black. CPM functions are plotted in grey.}\label{laplaciano3}
\end{figure}

\begin{figure}[h]\centering
\includegraphics[height=3.75in]{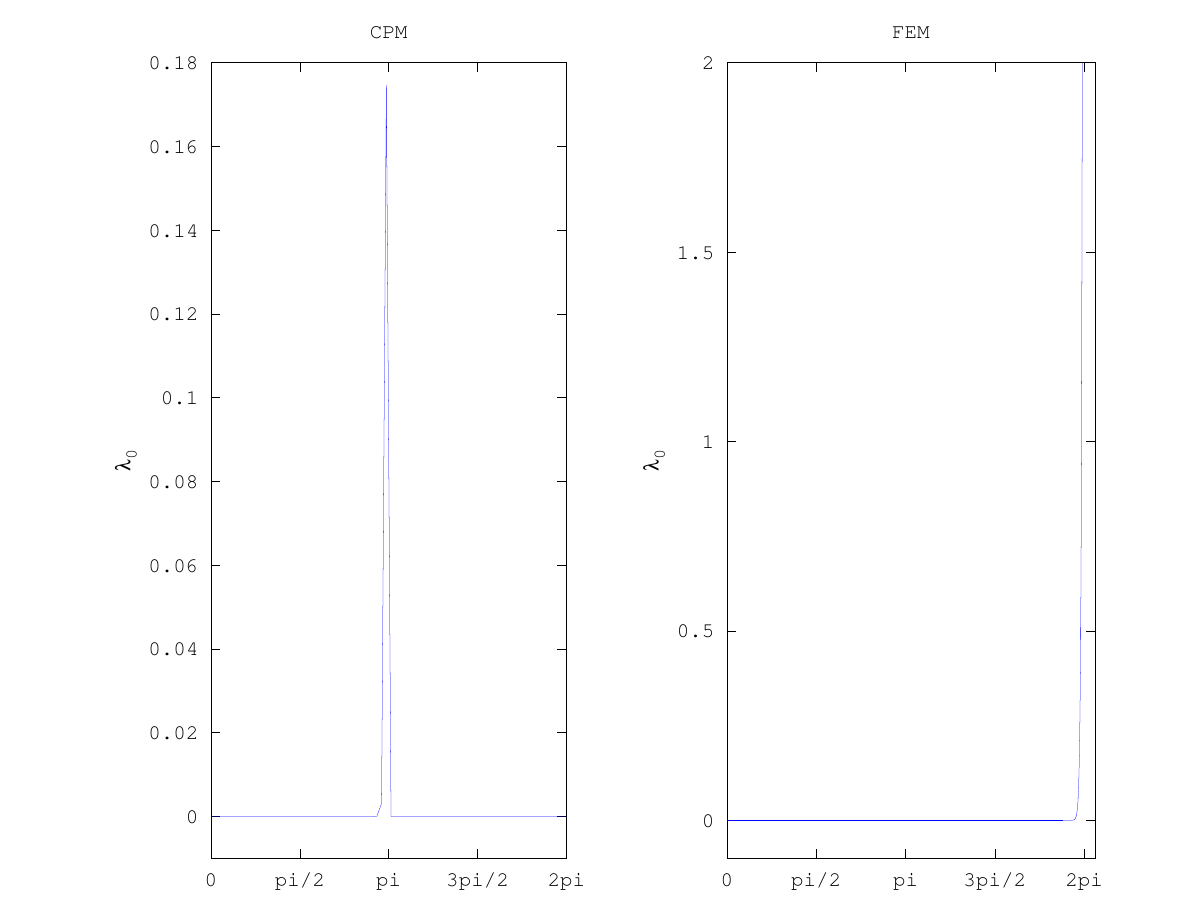}
\caption{Groundlevel eigenfunctions of the free particle problem for $\theta=3.1$.}\label{laplaciano4}
\end{figure}

\FloatBarrier


\section{Conclusions}

We can conclude that the algorithm introduced in this article is able to implement any, even nonlocal, self-adjoint extension of the 1D Schr\"odinger problem in a straightforward way. These boundary conditions are more general than the separable ones of the form $\alpha\cdot \Psi(a)+\beta\cdot \Psi'(a)=0$, $\gamma\cdot \Psi(b)+\delta\cdot \Psi'(b)=0$. However, this algorithm, in its present form is not able to solve general Sturm-Liouville problems like the ones adressed in \cite{Le10} if they are not presented in Schr\"odinger form. Due to the convergence and conditioning theorems obtained in sections \ref{conditioningoftheboundarymatrix} and \ref{convergenceofthenumericalscheme} the exact solutions can be achieved even if the original problem is close to being singular as has been shown in section \ref{Numerical_experiments_and_conclusions}.   Moreover, the scheme just discussed is susceptible of being adapted to use known techniques to improve the accuracy of higher order eigenvalues (\cite{An86, VB91}). We leave for subsequent works these tasks as well as a convergence rate analysis or the implementation of other finite element methods like the $p$-method.

The analytical properties of the problem exposed here, i.e. the description of the self-adjoint extensions of the Laplace-Beltrami operator in terms of Eq.\ \eqref{asorey}, as well as the convergence properties both of the finite element method and of the numerical scheme based on it, can be extended to arbitrary dimension and thus going beyond Sturm-Liouville problems.  The main issues to be addressed for that purpose are the regularity properties of the unitary operators appearing in the description of boundary conditions, that in that case are going to be defined on infinite dimensional Hilbert spaces, and the properties of the spectrum of the Laplace-Beltrami operator, that in dimension higher than one may be not discrete for a given self-adjoint extension.   The results presented in this article show that the capability of the finite element method to approximate the domains of linear operators in Hilbert spaces will play a crucial role in developing algorithms that address the aforementioned problem in higher dimensions, thus paving the way to develop the corresponding numerical algorithms.

\end{document}